\theoremstyle{definition}
\newtheorem{theorem}{Theorem}[section]
\newtheorem{corollary}[theorem]{Corollary}
\theoremstyle{remark}
\renewcommand{\subset}{\subseteq}
\newcommand{\st}{\colon\,}
\newcommand{\caze}[2]{\textbf{Case {#1}:} \textit{#2}}
\newcommand{\sizeof}[1]{\left\lvert{#1}\right\rvert}
\newcommand{\kay}{\mathcal{K}}
\begin{document}

\title{Motif and Hypergraph Correlation Clustering}
\author{ Pan~Li \and Gregory J.~Puleo \and Olgica Milenkovic}
\footnotetext[1]{Pan Li and Olgica Milenkovic are with the Coordinated Science Laboratory, Department of Electrical and Computer Engineering, University of Illinois at Urbana-Champaign (email: panli2@illinois.edu, milenkov@illinois.edu)}
\footnotetext[2]{Gregory J.~Puleo are with Department of Mathematics, Auburn University. (email: gjp0007@auburn.edu)}
\footnotetext[3]{This work was supported in part by NSF Grants CIF 1218764, CIF 1117980, 1339388 and STC Class 2010, CCF 0939370. Research of the second author was partly supported by the IC Postdoctoral Research Fellowship. Part of the results were presented at the Infocom 2017 conference~\cite{li2017motif}.}
\date{}
\maketitle
\renewcommand*{\thefootnote}{\arabic{footnote}}

\begin{abstract}
Motivated by applications in social and biological network analysis, we introduce a new form of agnostic clustering termed~\emph{motif correlation clustering}, which aims to minimize the cost of clustering errors associated with both edges and higher-order network structures. The problem may be succinctly described as follows: Given a complete graph $G$, partition the vertices of the graph so that certain predetermined ``important'' subgraphs mostly lie within the same cluster, while ``less relevant'' subgraphs are allowed to lie across clusters. 
Our contributions are as follows: We first introduce several variants of motif correlation clustering and then show that these clustering problems are NP-hard. We then proceed to describe polynomial-time clustering algorithms that provide constant approximation guarantees for the problems at hand. Despite following the frequently used LP relaxation and rounding procedure, the algorithms involve a sophisticated and carefully designed neighborhood growing step that combines information about both edge and motif structures. We conclude with several examples illustrating the performance of the developed algorithms on synthetic and real networks. 
\end{abstract}

\section{Introduction}
Correlation clustering is a clustering model first introduced by Bansal, Blum, and Chawla in~\cite{bansal2004correlation}
and it may be succinctly described as follows:  One is given a collection of objects and, for some pairs of objects, one is also given quantitative assessments of whether the objects
are \emph{similar} or \emph{dissimilar}. This information is represented using a labeled graph with edges marked by $+$ or $-$ symbols 
according to whether the endpoints are similar or dissimilar. The goal
is to partition the vertices of the graphs so that edges labeled by $+$ tend to aggregate within clusters
and edges labeled by $-$ tend to go across clusters. Unlike most other known clustering methods, 
correlation clustering does not require the number of clusters to be specified in advance. 

There are two formulations of the correlation clustering optimization problem: \emph{MinDisagree} and \emph{MaxAgree}. In the MinDisagree version of the problem one aims to minimize the number of erroneously placed edges, while in the MaxAgree version one seeks to maximize the total number of correctly placed edges. Finding an optimal solution 
to either problem is NP-hard. The MinDisagree problem remains hard even when the input graph is
complete~\cite{bansal2004correlation}.  
For complete graphs, several constant approximation randomized~\cite{ailon2008aggregating} and deterministic~\cite{charikar2003clustering} algorithms are known. When the graph is allowed to be arbitrary, the best known approximation ratio is $O(\log n)$~\cite{charikar2003clustering}.
Although finding an optimal solution
for the MaxAgree problem is hard, approximating an optimal solution is in this case significantly easier than for the case of MinDisagree~\cite{bansal2004correlation}.

Several variants of correlation clustering allow for including edge weights into the problem formulation, with each edge endowed with a ``similarity'' and ``dissimilarity'' weight: If the edge is placed across clusters, the edge is charged its similarity cost, and if the edge is placed within the same cluster, the edge is charged its dissimilarity cost. The MinDisagree clustering goal is to minimize the overall vertex partitioning cost. Clearly, if the weights are unrestricted, not all instances of the weighted clustering problem may be efficiently approximated. Hence, most of the work has focused on so-called \emph{probability weights}~\cite{bansal2004correlation}.
 
We depart from classical correlation clustering problems by considering a new setting in which one is allowed to assign probability weights to both edges and arbitrary small induced subgraphs in the graph (e.g., triangles) and then perform the clustering so as to minimize the overall cost of both edge and motif placements or motif placements alone. This enables one to extend traditional correlation clustering by considering higher-order structures in the network such as paths, triangles and cycles. Given that subgraphs/motifs may be modeled as hyperedges in a hypergraph, our line of work complements recent works on spectral hypergraph clustering methods~\cite{chien2018community,li2017inhomogeneous,li2018submodular,lin2017fundamental,ghoshdastidar2014consistency} and heuristic tensor spectral clustering methods~\cite{benson2015tensor}, as well as other generalizations of correlation clustering~\cite{puleo2015correlation,puleo2018correlation}. Furthermore, the proposed method allows for handling motifs in directed graphs by converting the directed graphs into undirected graphs while retaining information about the ``relevance'' of directed subgraphs within the graph. This relevance information may be incorporated into similarity and dissimilarity weights (for example, if only feedforward triangle motifs are relevant, only those directed motifs will be assigned large weight in the undirected graph and hence encouraged to fall within one cluster). As a result, motif clustering may have widespread applications for discovery of layered flows in a information networks, anomaly detection in communication networks or for determining hierarchical community structure detection in gene regulatory networks~\cite{benson2015tensor,benson2016higher}. In particular, motif correlation clustering can lead to pathway and community recovery for graphs and networks where higher order structures carry significantly more relevant information about the functionality, direction and strength of connections in the network than edges alone.

Our contributions are as follows. We rigorously formulate the
first known MinDisagree motif correlation clustering problem, and show
that it is NP-hard even when only some special motifs such as triangles are considered. 
Next, we introduce an extended correlation
clustering framework which allows to both fine tune the cost of
clustering edges and higher order network structures. 
We then describe a new
two-stage clustering algorithm comprising an LP and rounding step, and
show that the algorithm offers constant approximation guarantees that
depend on the size of the motifs under consideration. We also provide
examples illustrating that standard randomized pivoting algorithms
fail on this instance of correlation clustering. 
Our expositions concludes with several examples
pertaining to synthetic and real network analysis, such social, flow, and anomaly detection networks, illustrating the advantages of motif correlation clustering over other edge-based methods.

Since the publication of our original preliminary findings on the topic of motif correlation clustering~\cite{li2017motif}, follow up work was reported in~\cite{gleich2018correlation}. There, several versions of our motif clustering methods have been adapted to yield improved approximation constants. Nevertheless, new results reported in our work offer both new sophisticated proof techniques, mixed motif models as well as approximation guarantees that improve those reported in~\cite{gleich2018correlation}.

\section{Notation and problem formulation} \label{sec:intro} 
Let $G(V,E)$ be a complete, undirected graph with vertex set $V$ of cardinality $n$ and edge set $E$ of cardinality $\binom{n}{2}$. For simplicity, we will assume that the vertices are endowed with distinct integer labels in $[n]$ and this labeling introduces a natural ordering of the vertices. Also, we let $C=(C_1,\ldots,C_s)$, $1 \leq s \leq n,$ stand for a partition of the vertex set $[v]$ and $\mathcal{C}_n$ for the set of all partitions of $[n]$.  

Let $S$ be a (sub)set of vertices and let $\mathcal{K}(S)$ denote the set of all $k$-subsets of vertices in $S$, where $2 \leq k<n$ is a constant independent of $n$. Clearly, $|\mathcal{K}(S)|={|S| \choose k}$ and $|\mathcal{K}(V)|={n \choose k}$. Denote a subgraph of $G$ induced by a $k$-subset of vertices by $K_k\in \mathcal{K}(V)$ (whenever clear from the context, we omit the subscript $k$). Each $K$ is associated with a pair of non-negative values $(w_K^+,w_K^-)$. The weights $w_{K}^+$ and $w_{K}^-$ indicate the respective costs of placing this $k$-tuple $K$ across and within the cluster, respectively. Note that in most practical settings, the most relevant motifs in a graph are edges and triangles.
Typically, in practice, the $k$-tuple $K$ corresponds to a graph motif and its corresponding weights are determined by the functionality of that motif. For transparency of notation, we denote variables $x$ associated with $K$-subsets by $x_{K}$  or edges by $uv,\, u,v \in V$ and $ x_{uv}=x_{vu}$.

Our goal is to solve two MinDisagree versions of the problem: In the first version of the problem, termed \emph{motif correlation clustering (MCC)}, we first fix one motif graph on $k$ vertices and then seek a vertex partition $C\in\mathcal{C}_n$, that minimizes the objective function:
\begin{align}\label{eq:mcc}
\text{(MCC)} \;\;\;\;\; \min_{C \in \mathcal{C}_n}\sum_{K \subseteq C_i, \, \text{ for some $i$}}w_{K}^-+\sum_{K\not \subseteq C_i, \, \text{for all $i$}}w_{K}^+\,.
\end{align}
In the second version of the problem, termed \emph{mixed motif correlation clustering (MMCC)}, we are allowed to fix multiple motif graphs of possibly different sizes $2\leq k_1 < k_2 < \ldots < k_p,$
 and we seek a vertex partition $C=(C_1,\ldots,C_s)$, $s \geq 1$, that minimizes the objective function:
\begin{align}\label{eq:mmcc}
\text{(MMCC)} \;\;\;\;\; \min_{C \in \mathcal{C}_n} \sum_{t=1}^p \;  \lambda_{t} \, \left( \sum_{K\subseteq C_i, \, \text{ for some $i$}, \, |K|=k_t}w_{K}^-+\sum_{K\not \subseteq C_i, \, \text{for all $i$}, \, |K|=k_t}w_{K}^+ \right) \,.
\end{align}

Here, $\lambda_{t} \geq 0$ are relevance factors of the motifs of size $k_t$. Note that by choosing $\lambda=1$ for edges and setting all other relevance factors to zero, we arrive at the classical correlation clustering formulation. Furthermore, in both problems, we impose the triangle constraint on the weights $w_K^+ + w_K^-=1$.

Clearly, both the MCC and MMCC problems are NP-complete, as the correlation clustering problem is NP-complete. Furthermore, the following theorem, proved in Appendix~\ref{app1}, shows that the problems remain hard even for restricted choices of motifs, such as the case when $k=3$. Hence, we focus on developing (constant) approximation algorithms for the problems.
\begin{theorem}\label{thm:hardness}
For $k=3$, the MCC problem is NP-complete.
\end{theorem}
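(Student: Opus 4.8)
The plan is to prove NP-completeness by a polynomial-time reduction from the \mindis\ version of classical correlation clustering on complete graphs, which is NP-hard as recalled above. Membership in NP is immediate, since a partition $C$ is a polynomial-size certificate whose objective value can be computed in polynomial time (there are only $\binom n3$ triples when $k=3$).

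For the reduction, I start from a complete signed graph $H$ on $u_1,\dots,u_n$ with every pair labelled $+$ or $-$, and ``blow up'' each $u_i$ into a group $V_i$ of $N$ fresh vertices, where $N$ is a sufficiently large even integer that is polynomial in $n$ (e.g. $N=2n^2$); the MCC instance then has vertex set $V=V_1\cup\dots\cup V_n$ and $k=3$. Fixing a perfect matching $\mathcal M_i$ on each $V_i$, I assign motif weights as follows: a triple $K\subseteq V_i$ (a \emph{cohesion} triple) gets $(w_K^+,w_K^-)=(1,0)$; a triple $K=\{a,b,c\}$ with $\{a,b\}\in\mathcal M_i$ and $c\in V_j$, $j\neq i$ (an \emph{edge} triple), gets $(1,0)$ if $u_iu_j$ is labelled $+$ and $(0,1)$ if it is labelled $-$; every other triple gets $(\tfrac12,\tfrac12)$. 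All pairs sum to $1$, as the model requires, and the construction is clearly polynomial.

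The analysis rests on two facts. The first is routine: every triple of the third kind contributes exactly $\tfrac12$ to the objective no matter how $C$ is chosen, so together they add a fixed constant $D$ that is computable in polynomial time. The second, and the technical heart of the proof, is that in any \emph{optimal} solution each $V_i$ lies within a single cluster. To show this I would take an optimal $C$ in which some $V_i$ is split, let $s_1$ be the size of its largest part, and relocate all of $V_i$ into the cluster of that part: this zeroes the cohesion triples inside $V_i$, a gain of at least $\binom N3-\sum_\ell\binom{s_\ell}3$, while it perturbs only the edge triples meeting the $N-s_1$ moved vertices, each of which lies in $O(nN)$ edge triples, so the perturbation is $O((N-s_1)\,nN)$. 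A short case distinction (whether $s_1\le N/2$ or not) shows the gain is $\Omega(N^3)$ in the first case and $\Omega((N-s_1)N^2)$ in the second, which beats the perturbation once $N$ is a large enough multiple of $n$ — contradicting optimality. The hard part will be this counting estimate and getting the multiplicative constants to line up; everything else is bookkeeping.

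Given group cohesion, an optimal solution is nothing but a partition $\mathcal P$ of $[n]$ (with $i\sim j$ iff $V_i,V_j$ share a cluster): the cohesion triples cost $0$; the $N^2$ edge triples attached to a pair $u_iu_j$ cost $N^2$ exactly when $\mathcal P$ disagrees with the label of $u_iu_j$ (a $+$-pair split, or a $-$-pair together) and cost $0$ otherwise; and the remaining triples cost $D$. Conversely, any partition of $[n]$ blown up block-by-block gives a feasible solution of this form, so the MCC optimum equals $N^2\cdot\opt(H)+D$, where $\opt(H)$ denotes the minimum number of disagreements of $H$. Since $N$ and $D$ are computable in polynomial time, an algorithm for the MCC decision problem with threshold $N^2k+D$ decides whether $\opt(H)\le k$; combined with membership in NP, this proves that MCC with $k=3$ is NP-complete.
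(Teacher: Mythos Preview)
Your proposal is correct and complete: the blow-up gadget with matchings, the counting estimate for the cohesion-versus-perturbation trade-off, and the final identification with $N^2\cdot\opt(H)+D$ all go through with $N$ a modest polynomial in $n$. It is, however, a genuinely different reduction from the paper's. The paper reduces from \textsc{Partition into Triangles}: for each triple of the input graph it adds a large clique of $n^5$ auxiliary vertices and uses only $\{0,1\}$ weights, then argues via a somewhat delicate error-count that any optimal MCC partition must place at most three original vertices per cluster, thereby simulating the cluster-size constraint needed for the triangle-partition problem. Your reduction from edge-based correlation clustering is shorter and more transparent --- the vertex blow-up enforces group cohesion directly, and the matching device encodes each signed edge by exactly $N^2$ triples --- but it relies on the neutral weight $(\tfrac12,\tfrac12)$ to silence the many ``other'' triples. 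Thus the paper's argument yields the marginally stronger statement that MCC is hard already for $\{(1,0),(0,1)\}$ weights, whereas yours proves the theorem as stated (general probability weights) with less machinery.
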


We pointe out that one may also consider the MaxAgree version of the motif clustering problem where the objective functions in (MCC) and (MMCC) that summarize disagreement are replaced by objective functions that summarize agreement, and where correspondingly the $\min$ function is replaced by the $\max$ function. As for the case of correlation clustering, it is straightforward to show that taking the better of two clusterings, the all-singleton clustering and the single-component clustering provides a $2$-approximation for the problem. 

The MinDisagree version of correlation clustering is usually approximately solved using two approaches: Pivoting methods~\cite{ailon2008aggregating} and relaxed Integer Programming (IP) methods that reduce to solving a Linear Program (LP) followed by rounding~\cite{charikar2003clustering}. The pivoting algorithm is a straightforward randomized approach that provides constant approximation guarantees for the expected value of the objective, and has straightforward, yet efficient, 
parallel implementations~\cite{pan2015parallel}. For the unweighted clustering problem, it may be succinctly described as follows: One selects a pivot vertex uniformly at random, incorporates all its ``similar'' neighbors (i.e., those with edge label `+') into one cluster, removes all vertices in the newly formed cluster from the graph and then proceeds to iteratively repeat the same steps. Unfortunately, using this approach for motif clustering cannot lead to constant approximation results, as illustrated by the example below. 

Consider the MCC problem for complete graphs and triple-motifs, i.e., for $k=3$. Suppose that each edge is labeled, with labels in the set $\{{+,-\}}$, and that each triple $K$ is associated with a pair of weights $(w_{K}^+,w_{K}^-)\in\{(1,0),(0,1)\}$. triples that correspond to triangles with positively labeled edges only have weights $(w_{K}^+,w_{K}^-)=(1,0)$, and are termed ``positive'' triples. All other triples have weights $(w_{K}^+,w_{K}^-)=(0,1)$, and are termed ``negative'' triples. For this setting, neither pivoting on a pair of vertices (e.g., an edge) nor pivoting on a single vertex may provide constant approximation guarantees, as demonstrated by the examples in Figure~\ref{pivoting}. Both graphs are complete graphs but for ease of interpretation, only positively labeled edges are depicted. In the first case, one chooses a (positive) edge uniformly at random and includes in the cluster all positive edges connected to the pivoting edge. For Figure~\ref{pivoting} a), the optimal clustering comprises two clusters, $C_1=\{v_1,v_2,v_3\}$ and $C_2=\{v_4,v_5,v_6\}$ and has an MCC objective function value equal to zero. If one pivots on the edge $(v_1,v_4)$, the resulting clustering contains one cluster only, $C1=\{v_1,v_2,\ldots,v_6\}$, and leads to a positive value of the objective function, and hence an unbounded ratio of the optimal and approximate objective. Pivoting on vertices may fail as well, which may be seen from example b): The graph in b) has a unique optimal clustering with two clusters $C_1=\{v_1,v_2,v_3\}$ and $C_2=\{v_4,v_5,...,v_n\}$. Choosing the vertex $v_3$ as pivot and including all vertices connected to $v_3$ through positive edges leads to $v_1,v_2, v_4$ being clustered together with $v_3$, thereby resulting in $O(n^2)$ more errors than those incurred by the optimal clustering. As there are $n$ vertices in the graph, the expected value of the objective may have an error term $O(n)$.

\begin{figure}[h] 
\centering
\includegraphics[trim={0cm 19.6cm 1cm 0.9cm}, clip, width=0.9\linewidth]{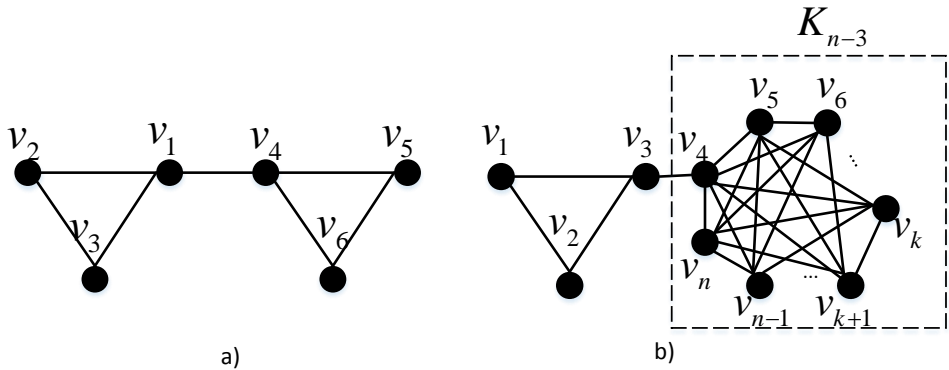} 
\caption{Pivoting on edges and vertices of graphs. Note that $K_{n-3}$ stands for a complete graph on $n-3$ vertices.}
\label{pivoting}
\end{figure} 

\section{Main results} \label{sec:main} 

We describe next polynomial-time, constant approximation algorithms for the MCC and MMCC problems. For the former case, we propose two methods that offer different trade-offs between optimization performance and complexity, as measured in terms of the number of constraints used in the underlying LP program. The approach followed is to relax the IPs of~(\ref{eq:mcc}) and~(\ref{eq:mmcc})  to LPs and then perform rounding of the fractional solutions. The main analytical difficulties encountered in this approach are that the LPs involve both edge and higher order motif variables, and that trying to round all these variables simultaneously may cause inconsistencies and large rounding errors. More precisely, in the LP formulation one has to incorporate variables associated with $k$-tuples, while rounding only works with variables associated with pairs of vertices. To overcome this issue for the MCC problem, our first solution introduces motif variables in the LP and then performs rounding on edges by assigning to them a cost that reflects the value of the best-scoring motif that the edge is part of. The second solution is based on an LP which involves both motif and edge variables and allows downstream rounding to be directly performed on the edge variables. The second method has fewer constraints in the underlying LP than the first method, and is hence more computationally efficient. The drawback is that provides worse approximation guarantees than the first method. For the MMCC problem, one may use the second method developed for the MCC problem with the inclusion of additional constraints for $k$-tuple and edge variables. The approximation factor is determined by the size of largest motifs. 

As in the formulation of the MCC problem, let $K$ correspond to a $k$-tuple and let $x_K$ denote the indicator variable for 
the event that the vertices in $K$ are split among clusters (i.e., $x_K=0$ if the vertices of $K$ lie in the same cluster, and $x_K=1$ otherwise). 
Relaxing the above integral constraint to $x_K \in [0,1] $ and rewriting the probability weight constraints leads to the following relaxed MCC optimization problem: 
\begin{align} 
\textbf{LP1} \quad\quad \quad &\min_{\{{x_K\}}}\quad \sum_{K\in \mathcal{K}(V)} w_K^+x_K+w_K^-(1-x_K) \quad \quad \quad  \nonumber\\
&\text{s.t.} \quad \quad x_K\in [0,1]  \quad \quad \quad \quad \quad \quad \text{(for all $K \in \mathcal{K}(V)$)} \nonumber \\
	&\quad \quad x_{K_3}\leq x_{K_1}+x_{K_2} \quad \quad \text{(for all $(K_1, K_2, K_3)\in \Upsilon)$}  \nonumber
\end{align}
where 
\begin{align} 
\Upsilon=&\{(K_1, K_2, K_3) \in \left[\mathcal{K}(V)\right]^3: K_1, K_2, K_3\;\text{are distinct}\nonumber\\
&\text{unordered k-tuples},\,K_1\cap K_2\neq\emptyset,\,K_3\subset K_1\cup K_2\}.  \nonumber 
\end{align}

Note that the constraints imposed on triples in $\Upsilon$ ensure that if two motifs share vertices and belong to the same cluster,
the additional motifs formed by the vertices also belong to the same cluster. 

The LP solutions are rounded according to Algorithm~\ref{alg:rounding1}, described below. The intuition behind the rounding algorithm is to use the fractional solutions of the LP $k$-tuple variables to perform rounding on pairs of variables. The reason for using different variables in the LP and in the rounding procedure is that
the LP constraints are harder to state and analyze via pairwise variables, while rounding is harder to perform via $k$-tuple variables as they incur complex codependencies. The key is to transition from $k$-tuples to pairs of variables by recording the ``best motif'' to which an edge belongs, and then using the corresponding fractional value of the motif variable to perform neighborhood growing via edge incorporation.
\begin{algorithm}[h] 
    \caption{Rounding procedure with $\alpha\leq\frac{1}{k}$.}
    \label{alg:rounding1}
    \begin{algorithmic}
     	\STATE{Let $S = V(G)$}
	\WHILE{$|S|\geq k$}
        \STATE{Choose an arbitrary pivot vertex $v$ in $S$}
        \STATE{For all $u\in S/\{v\}$, compute $y_{vu}=\min_{K\subset S: v,u\in K} x_{K}$}
        \STATE{Let $\mathcal{N}_{\alpha}(v)=\{u\in S/\{v\}: y_{vu}\leq \alpha\}$}
	\IF{$\sum_{j\in \mathcal{N}_{\alpha}(v)} y_{vu}>  \frac{\alpha}{2}|\mathcal{N}_{\alpha}(v)|$}
	\STATE{Output the singleton cluster $\{v\}$}
	\ELSE
	\STATE{Output the cluster $C=\mathcal{N}'_{\alpha}(v)$}
	\STATE{Let $S=S/C$}
	\ENDIF
        \ENDWHILE
	\STATE{Output all clusters $C$}
    \end{algorithmic}
\end{algorithm}

\begin{theorem}\label{thm:1}
Let $k$ be a constant size of a motif. For any $\alpha\leq \frac{1}{k}$ and the probability constraint $w_K^++w_K^-=1$ satisfied by every motif $K$ of size $k$, the LP coupled with the rounding procedure of Algorithm~\ref{alg:rounding1} provides a $\frac{2}{\alpha}$-approximate solution to the MCC problem.
\end{theorem}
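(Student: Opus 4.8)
The plan is to follow the standard region-growing/charging framework used for correlation clustering LP rounding (as in Charikar--Guruswami--Wirth), but adapted so that the ``length'' of an edge $uv$ is taken to be $y_{vu} = \min_{K \ni u,v} x_K$, the best-motif fractional value, rather than a pairwise LP variable. The key quantity to control is: at each iteration, when we process a pivot $v$ and form a cluster $C$ (either $\{v\}$ or $\mathcal{N}'_\alpha(v)$), we want to bound the clustering cost \emph{charged to motifs that are resolved at this iteration} by $\frac{2}{\alpha}$ times the LP cost of those same motifs. Since the algorithm peels off clusters one at a time, each $k$-tuple $K \subseteq V$ is ``resolved'' (i.e., its fate — split or not — is decided) at the first iteration in which $C$ contains at least one vertex of $K$; it suffices to charge each motif to that iteration and sum.

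First I would set up the two cases of the rounding rule. In the \textbf{singleton case} ($\sum_{u \in \mathcal{N}_\alpha(v)} y_{vu} > \frac{\alpha}{2}|\mathcal{N}_\alpha(v)|$), the cluster is $\{v\}$, so every motif $K$ with $v \in K$ that gets resolved here is \emph{cut}, incurring cost $w_K^+$. I would argue that the LP already pays a constant fraction of this: for any such $K$, since it must use some vertex $u \in K \setminus \{v\}$, either $u \notin \mathcal{N}_\alpha(v)$, in which case $x_K \ge y_{vu} > \alpha$ and the LP pays $\ge \alpha$ toward a cost of $w_K^+ \le 1$; or $u \in \mathcal{N}_\alpha(v)$, and here the triggering inequality together with the triangle-type constraints on $\Upsilon$ (which let us relate $x_K$ to the $y_{vu}$'s) forces the \emph{total} LP cost over these motifs to be at least $\frac{\alpha}{2}$ times their count, matching the $\frac{2}{\alpha}$ factor against the total integral cost (which is at most the count). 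In the \textbf{non-singleton case}, the cluster is $C = \mathcal{N}'_\alpha(v)$, and a resolved motif $K$ is cut only if $K \not\subseteq C$, i.e. $K$ has a vertex $u$ with $y_{vu} > \alpha$ (or more precisely outside $\mathcal{N}'_\alpha(v)$); again $x_K \ge y_{vu} > \alpha$ forces the LP to pay $\ge \alpha \ge \alpha \cdot w_K^+$. Motifs entirely inside $C$ are merged, incurring $w_K^-$; for these I need $x_K$ small, which follows because all pairwise distances inside $\mathcal{N}_\alpha(v)$ from $v$ are $\le \alpha$, and the $\Upsilon$-constraints (with $\alpha \le \frac1k$, so that traversing the $\le k$ vertices of $K$ via $v$ costs $\le k\alpha \le 1$) bound $x_K$ away from $1$, so $w_K^- = 1 - x_K^{\text{LP-lower-bound-complement}}$ is comparable to the LP cost $w_K^- = 1 - x_K$ up to a factor absorbed in $\frac{2}{\alpha}$.

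The main technical obstacle — and the step I would spend the most care on — is the \textbf{passage from the $k$-tuple LP variables $x_K$ to the pairwise ``distances'' $y_{vu}$ and back}, specifically establishing the inequality that for a motif $K$ with $v \in K$, $x_K$ is sandwiched between (a constant times) the pairwise values $\{y_{vu} : u \in K\}$. The upper direction ($x_K \le \sum_{u \in K\setminus v} y_{vu}$, or at least $x_K \le k \max_u y_{vu}$) should come from iterating the $\Upsilon$-constraint $x_{K_3} \le x_{K_1} + x_{K_2}$ along a ``path'' of overlapping $k$-tuples threading through $v$ and the vertices of $K$ — here one must check that the intermediate $k$-tuples needed actually lie in $\mathcal{K}(S)$ at the current iteration and that the overlap/containment conditions defining $\Upsilon$ are met, which is exactly where $|S| \ge k$ in the while-loop and the constancy of $k$ are used. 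The lower direction ($y_{vu} \le x_K$) is immediate from the definition of $y_{vu}$ as a minimum. Once this sandwich is in hand, the per-iteration charging is a routine summation, and summing over all iterations — noting that the sets of motifs resolved at distinct iterations are disjoint and cover every $K \in \mathcal{K}(V)$ — yields $\cost(\text{ALG}) \le \frac{2}{\alpha}\cdot\text{LP}^* \le \frac{2}{\alpha}\cdot\opt$, which is the claim.
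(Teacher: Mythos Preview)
Your overall framework is right, but there is a genuine gap: you do not handle the motifs $K$ that \emph{do not contain the pivot $v$}. Both of your key inequalities break down there. For the split case you write ``$x_K \ge y_{vu} > \alpha$'', but $y_{vu} = \min_{K' \ni v,u} x_{K'}$ is a minimum over tuples containing $v$; when $v \notin K$ the tuple $K$ is not in that family, so no such lower bound follows. Concretely, if $u_1 \in S \setminus \mathcal{N}'_\alpha(v)$ and $u_2,\dots,u_k \in \mathcal{N}_\alpha(v)$, the $\Upsilon$-constraints only give $x_K \ge y_{vu_1} - y_{vu_j}$ for some $j$, which can be arbitrarily small. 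For the merged case you say the $\Upsilon$-constraints ``bound $x_K$ away from $1$'', but chaining through $v$ for a $K \subset \mathcal{N}_\alpha(v)$ with $v \notin K$ yields only $x_K \le k\alpha$; at $\alpha = 1/k$ this is $x_K \le 1$, so the ratio $(1-w_K)/[(1-w_K)(1-x_K)+w_Kx_K]$ is unbounded.

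The missing idea is that for $v \notin K$ one cannot control $x_K$ pointwise; one must exploit the \emph{averaging condition} $\sum_{u \in \mathcal{N}_\alpha(v)} y_{vu} \le \frac{\alpha}{2}|\mathcal{N}_\alpha(v)|$ that triggers the non-singleton output. The paper does this by (i) ordering the vertices of $\mathcal{N}_\alpha(v)$ by $y_{vu}$ and grouping merged motifs by their $\prec$-largest vertex $u$, then splitting into the sub-cases ``some $K' \ni v,u$ has $x_{K'} \le \alpha/2$'' (where $x_K \le \frac{(2k-1)\alpha}{2}$ pointwise) versus ``all such $K'$ have $x_{K'} > \alpha/2$'' (where a summation over the group, using the ordering to bound $\sum y_K$, gives the factor $\frac{2}{2-(2k-1)\alpha}$); and (ii) grouping split motifs by $S' = K \setminus \mathcal{N}'_\alpha(v)$ and splitting into ``some $K' \ni v$ with $K' \setminus \mathcal{N}'_\alpha(v) = S'$ has $x_{K'} \ge 1-\alpha/2$'' versus its complement, again using the averaging condition in the hard sub-case. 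These four sub-cases are where the factor $\frac{2}{\alpha}$ (rather than, say, $\frac{1}{\alpha}$) is actually forced, and your proposal does not yet contain the mechanism that produces them.
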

\begin{proof} The proof is given in Appendix~\ref{app2}. \end{proof}

It is easy to verify that $|\Upsilon|=\sum_{i=k+1}^{2k-1}{|V|\choose i} \left[{i\choose k}{k\choose 2k-i}/2\right] \left[{i\choose k}-2\right]$. For constants $k$ such that $k\ll n$, $|\Upsilon|=\Theta(n^{2k-1})$. This indicates that the number of constraints in the LP grows exponentially with the size of the motif, which may lead to computational issues when the motifs are large. The next LP has a significantly smaller number of triangle constraints, reduced from $\Omega(n^{2k-1})$ to $\Omega(n^{3})$. In particular, this LP excludes a number of triangle inequalities as constrains. One cannot reduce the number of constraints below $\Theta(n^{k})$, as $\Theta(n^{k})$ variables are needed to represent all possible $k$-tuples. 

To describe the LP, we introduce some auxiliary variables. Let $z_{vu},$ $v,u\in V,$ denote the indicator of the event that a pair of vertices $v,u$ belong to different clusters 
(i.e., $z_{vu}=0$ if $v$ and $u$ belong to the same cluster, and $z_{vu}=0$ otherwise). By replacing the indicator variables $z_{vu} \in [0,1]$ and letting $x_{K} \in [0,1]$ as before, we arrive at the following LP problem formulation. 
\begin{align} 
\textbf{LP2} \quad\quad \quad &\min_{\{x_K\}, \{z_{vu}\}}\quad \sum_{K\in \mathcal{K}(V)} w_K^+x_K+w_K^-(1-x_K) \quad \quad \quad \quad \quad \quad \\
&\text{s.t.} \quad \quad x_K\geq z_{vu}   \quad\quad\quad\quad \quad\quad\quad \quad \quad \text{(for all $K \in \mathcal{K}(V)$ and $v,u\in K$)},  \label{LP2ineq1} \\
&		\quad \quad x_K\leq \frac{1}{k-1}\sum_{v,u\in K, v<u} z_{vu}, \quad  x_K\leq 1\quad \quad\quad \quad \quad \quad \text{(for all $K \in \mathcal{K}(V)$)},   \label{LP2ineq2} \\
&		\quad \quad z_{vu}\geq 0 \quad\quad\quad\quad\quad\quad\quad\quad\quad\quad\quad\quad\quad \quad \quad \quad \quad \quad \text{(for all $i,j \in V$)},  \nonumber \\
&	\quad \quad z_{v_2v_3}\leq z_{v_1v_2}+z_{v_1v_3} \quad\quad\quad \quad \text{(for all distinct vertices $v_1, v_2, v_3\in V)$}.\nonumber
\end{align}

A simple counting argument reveals that the number of constraints in the LP equals $\Theta({n\choose k}{k\choose 2}+{n\choose 3})$. 
Note that the inequalities~\eqref{LP2ineq1} and~\eqref{LP2ineq2} handle constraints on the $k$-tuples: Placing any pair of vertices in $K$ across clusters places $K$ across clusters, and placing $K$ across clusters causes placing at least $k-1$ many pairs of vertices across clusters. 
For the practically most relevant case $k=3$, the number of constraints in the above described optimization problem is roughly twice of that used in classical LP-based correlation clustering solvers~\cite{charikar2003clustering}. 

Algorithm~\ref{alg:rounding2} described the rounding procedure for the solution of \textbf{LP2}. In this case, the procedure reduces to the classical region growing method of~\cite{bansal2004correlation,charikar2003clustering}.
\begin{algorithm}[h]
    \caption{Rounding Procedure with parameters $\alpha,\beta\leq\frac{1}{k}$}
    \label{alg:rounding2}
    \begin{algorithmic}
     	\STATE{Let $S = V(G)$}
	\WHILE{$|S|\geq k$}
        \STATE{Choose an arbitrary pivot vertex $v$ in $S$}
        \STATE{Let $\mathcal{N}_{\alpha}(v)=\{u\in S/\{v\}: z_{vu}\leq \alpha\}$}
	\IF{$\sum_{u\in \mathcal{N}_{\alpha}(v)} z_{vu}>  \beta\alpha|\mathcal{N}_{\alpha}(v)|$}
	\STATE{Output the singleton cluster $\{v\}$}
	\ELSE
	\STATE{Output the cluster $C=\mathcal{N}'_{\alpha}(v)$}
	\STATE{Let $S=S/C$}
        \ENDIF
        \ENDWHILE
	\STATE{Output $S$}
    \end{algorithmic}
\end{algorithm}

\begin{theorem}\label{thm:2} Let $k$ be a constant size of a motif. For any $\alpha,\beta \leq \frac{1}{k}$ and the probability constraint $w_K^++w_K^-=1$ satisfied by every motif $K$ of size $k$, the LP coupled with the rounding procedure of Algorithm~\ref{alg:rounding2} provides a $\frac{1}{\alpha\beta}$-approximate solution to the MCC problem.
\end{theorem}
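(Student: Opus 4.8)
The plan is to prove the stronger statement that the cost $\mathrm{RC}$ of the clustering output by Algorithm~\ref{alg:rounding2} is at most $\frac{1}{\alpha\beta}$ times the optimal value of \textbf{LP2}; since the $0/1$ assignment $z_{vu}=\mathbf{1}[v,u\text{ separated}]$, $x_K=\mathbf{1}[K\text{ separated}]$ induced by any partition is feasible for \textbf{LP2} with objective equal to the (MCC)-cost of that partition, \textbf{LP2} lower-bounds $\mathrm{opt}$ and the guarantee follows. Fix an optimal LP solution and write $\mathrm{LP}=\sum_K\ell_K$ with $\ell_K:=w_K^+x_K+w_K^-(1-x_K)$. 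Algorithm~\ref{alg:rounding2} removes $V$ in a sequence of clusters $C_1,C_2,\dots$ (each a singleton or a ball $\mathcal{N}_\alpha(v)\cup\{v\}$) with pivots $v_1,v_2,\dots$ and shrinking remainders $S_1\supsetneq S_2\supsetneq\cdots$. Each $k$-tuple $K$ is \emph{resolved} at the unique step $t$ with $K\subseteq S_t$ and $K\cap C_t\ne\emptyset$: it is \emph{merged} (cost $w_K^-$) if $K\subseteq C_t$, \emph{cut} (cost $w_K^+$) otherwise, and thereafter never reconsidered. Letting $R_t$ be the tuples resolved at step $t$ and $\mathrm{cost}_t:=\sum_{K\in R_t,\,\text{merged}}w_K^-+\sum_{K\in R_t,\,\text{cut}}w_K^+$, the $R_t$ partition $\mathcal{K}(V)$, so $\mathrm{RC}=\sum_t\mathrm{cost}_t$ and it suffices to establish the \emph{local} bound $\mathrm{cost}_t\le\frac1{\alpha\beta}\sum_{K\in R_t}\ell_K$ for every $t$.

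Two elementary estimates, both using $w_K^++w_K^-=1$ and $\alpha\le\frac1k$, carry the bookkeeping. (a) If $u,u'\in K$ with $z_{uu'}>\alpha$, then by~\eqref{LP2ineq1} $\ell_K\ge w_K^+x_K\ge w_K^+z_{uu'}>\alpha w_K^+$, so $w_K^+\le\frac1\alpha\ell_K$. (b) If $v\in K$ and $z_{vu}\le\alpha$ for all $u\in K\setminus\{v\}$, then the triangle inequality and~\eqref{LP2ineq2} give $x_K\le\sum_{u\in K\setminus\{v\}}z_{vu}\le(k-1)\alpha\le 1-\tfrac1k$; since $\ell_K$ is a convex combination of $x_K$ and $1-x_K$ with $x_K\ge\max_u z_{vu}\ge\frac1{k-1}\sum_u z_{vu}$, we obtain both $\ell_K\ge\frac1{k-1}\sum_{u\in K\setminus\{v\}}z_{vu}$ and, from $\ell_K\ge w_K^-(1-x_K)\ge\frac1k w_K^-$, the bound $w_K^-\le k\ell_K$.

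For a \emph{singleton} step ($C_t=\{v\}$, $\sum_{u\in\mathcal{N}_\alpha(v)}z_{vu}>\beta\alpha\,|\mathcal{N}_\alpha(v)|$) every $K\in R_t$ contains $v$ and is cut, so $\mathrm{cost}_t=\sum_{K\ni v,\,K\subseteq S_t}w_K^+$. A tuple with a vertex $u\notin\mathcal{N}_\alpha(v)$ has $z_{vu}>\alpha$, so (a) gives $w_K^+\le\frac1\alpha\ell_K\le\frac1{\alpha\beta}\ell_K$. For the remaining family $\mathcal{M}:=\{K\ni v:K\setminus\{v\}\subseteq\mathcal{N}_\alpha(v)\}$, summing (b) over $\mathcal{M}$ and using the identity $\binom{|\mathcal{N}_\alpha(v)|-1}{k-2}/(k-1)=\binom{|\mathcal{N}_\alpha(v)|}{k-1}/|\mathcal{N}_\alpha(v)|$ yields $\sum_{K\in\mathcal{M}}\ell_K\ge\frac1{|\mathcal{N}_\alpha(v)|}\binom{|\mathcal{N}_\alpha(v)|}{k-1}\sum_{u\in\mathcal{N}_\alpha(v)}z_{vu}>\beta\alpha\binom{|\mathcal{N}_\alpha(v)|}{k-1}\ge\alpha\beta\sum_{K\in\mathcal{M}}w_K^+$, the middle step being exactly the singleton test. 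Adding the two parts gives $\mathrm{cost}_t\le\frac1{\alpha\beta}\sum_{K\in R_t}\ell_K$, and this part of the argument is complete.

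For a \emph{cluster} step ($C_t=\mathcal{N}_\alpha(v)\cup\{v\}$, $\sum_{u\in\mathcal{N}_\alpha(v)}z_{vu}\le\beta\alpha\,|\mathcal{N}_\alpha(v)|$) the easy contributions are: merged tuples through $v$ (handled by $w_K^-\le k\ell_K\le\frac1{\alpha\beta}\ell_K$ from (b)), and cut tuples through $v$ (which contain some $w\notin C_t$, so $z_{vw}>\alpha$ and (a) applies). The main obstacle is the two remaining families: merged tuples with $v\notin K$ (all vertices in $\mathcal{N}_\alpha(v)$), for which $x_K$ can a priori be as large as $k\alpha$, making $\ell_K\ge w_K^-(1-x_K)$ vacuous at $\alpha=\frac1k$; and cut tuples with $v\notin K$, whose ``far'' pair $z_{uw}$ ($u\in\mathcal{N}_\alpha(v)$, $w\notin C_t$) can be arbitrarily small. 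Here one must run a genuine region-growing accounting against the $z$-volume $\sum_{u\in\mathcal{N}_\alpha(v)}z_{vu}$ bounded by the light test: this volume, via~\eqref{LP2ineq2} and the triangle inequality, simultaneously forces $x_K$ — hence $\ell_K$ — away from $0$ for the merged tuples not incident to $v$ (because an $x_K$ near $1$ requires many large pairs inside $K$, and each such pair charges $\ell_{K'}$ on all tuples $K'$ overlapping it), and bounds the number of cut tuples not incident to $v$ by the LP volume $\sum_{K\in R_t}\ell_K$ of the tuples meeting $C_t$. With this estimate in hand the local bound holds for cluster steps too; summing over all steps gives $\mathrm{RC}\le\frac1{\alpha\beta}\mathrm{LP}\le\frac1{\alpha\beta}\mathrm{opt}$, which is the claim.
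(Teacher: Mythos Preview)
Your framework (resolving each $K$ at a unique step, proving a local bound per step) is sound, and your treatment of the singleton step matches the paper's Case~1 exactly. For the cluster step, the two ``easy'' subcases (merged or cut tuples containing the pivot $v$) are also correct. The genuine gap is that you do not actually prove anything for the two hard subcases --- merged tuples $K\subseteq\mathcal{N}_\alpha(v)$ with $v\notin K$, and cut tuples with $v\notin K$ --- you only gesture at a ``region-growing accounting'' and then declare the estimate ``in hand.'' These are precisely the cases that consume almost all of the work in the paper (Cases 2.1.1/2.1.2 and 2.2.1/2.2.2 in Appendix~\ref{app3}), and the vague description you give does not correspond to what is actually required.

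Concretely: for merged tuples not through $v$, a pointwise bound is impossible (at $\alpha=1/k$ one may have $x_K$ close to $1$ while $w_K^-$ is close to $1$, so $\ell_K\approx 0$), so your claim that the volume ``forces $x_K$ --- hence $\ell_K$ --- away from $0$'' is backwards: the danger is $x_K$ near $1$, not near $0$. The paper's fix is to \emph{order} the vertices of $\mathcal{N}_\alpha(v)$ by $z_{vu}$, attach each $K$ to its $\prec$-largest vertex $u$, and split on whether $z_{vu}\le\beta\alpha$ (then $x_K\le k\beta\alpha$ pointwise) or $z_{vu}>\beta\alpha$ (then use $x_K\ge z_{vu}-\frac{1}{k-1}\sum_{u'\in K\setminus\{u\}}z_{vu'}$ and aggregate over all $K$ attached to $u$, exploiting that the prefix average of the $z_{vu'}$'s is at most $\beta\alpha$). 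For cut tuples not through $v$, the paper groups by $S'=K\setminus\mathcal{N}'_\alpha(v)$ and introduces a second threshold $(1+\beta)\alpha$: if some $u\in S'$ has $z_{vu}\ge(1+\beta)\alpha$ then $x_K\ge z_{vu}-z_{vu'}\ge\beta\alpha$ pointwise; otherwise one again needs an averaging argument over all $K$ with the given $S'$, using $\sum_{u'\in\mathcal{N}_\alpha(v)}z_{vu'}\le\beta\alpha|\mathcal{N}_\alpha(v)|$. Neither of these mechanisms appears in your write-up; without them the local bound for cluster steps is unproved.
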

\begin{proof} The proof of the theorem is presented in Appendix~\ref{app3}. \end{proof}

Observe that the approximation guarantees of Theorem~\ref{alg:rounding2} are worse than those of Theorem~\ref{alg:rounding1}, which is the price paid for reducing the number of constraints. Furthermore, since the rounding procedure operates 
on pairs of vertices only and does not involve variables for $k$-tuples, it may be used for solving the MMCC problem as well. We outline the corresponding result in what follows.

let $S=\{k_1,k_2,...,k_p\}$ be the set of motif sizes of interest, and let $\mathcal{K}_t(V)$ be the set of all $k_t$-tuples of 
$V$. Using the same notation as in the MCC version of the problem, we may state the following LP relaxation for the MMCC problem:
\begin{align} 
\textbf{LP3}\quad\quad&\min_{\{x_K\},\{z_{uv}\}}\quad \sum_{t=1}^p \lambda_t\left[\sum_{K\in\mathcal{K}_t(V)} w_K^+x_K+w_K^-(1-x_K)\right] \quad \quad \quad \quad \quad \quad \nonumber \\
&\text{s.t.} \quad \quad x_K\geq z_{uv}   \quad \quad\quad\quad \quad  \text{(for all $1\leq t\leq p$, $K \in \mathcal{K}_t(V)$ and $u,v\in E(K)$)}, \nonumber \\
&		\quad \quad x_K\leq \frac{1}{|K|-1}\sum_{u,v\in K, u<v} z_{uv}, \quad  x_K\leq 1 \quad \quad  \text{(for all $1\leq t\leq p$, $K \in \mathcal{K}_t(V)$)},  \nonumber \\
&		\quad \quad z_{uv}\geq 0 \quad\quad\quad\quad\quad\quad\quad\quad\quad\quad\quad\quad\quad \quad \quad \quad \quad \quad \text{(for all $u,v \in V$)},   \nonumber\\
&	\quad \quad z_{u_2u_3}\leq z_{u_1u_2}+z_{u_1u_3} \quad\quad\quad \quad \text{(for all distinct vertices $u_1, u_2, u_3\in V)$}. 
\end{align}
The rounding method accompanying this LP is also described in Algorithm~\ref{alg:rounding2}, with the parameters $\alpha,\beta$ bounded from above by $\frac{1}{k^*}$, where $k^*=\max \,S=\max\{k_1,k_2,...,k_p\}$. 

\begin{corollary} For $\alpha,\beta \leq \frac{1}{k^*}$, and all motif weights satisfying the probability constraint $w_K^++w_K^-=1$, the rounded LP algorithm provides an $\frac{1}{\alpha\beta}$-approximate solution to the MMCC problem. \end{corollary}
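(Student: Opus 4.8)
The plan is to reduce the corollary to Theorem~\ref{thm:2} by exploiting two structural facts. First, the objective of \textbf{LP3} is a nonnegative linear combination, with coefficients $\lambda_t$, of the per-size objectives $\sum_{K\in\mathcal{K}_t(V)} w_K^+x_K + w_K^-(1-x_K)$. Second, the rounding procedure of Algorithm~\ref{alg:rounding2} reads only the pairwise variables $z_{uv}$ and never the motif variables $x_K$. Hence running the algorithm on an optimal solution $(\{x_K\},\{z_{uv}\})$ of \textbf{LP3} produces a single partition $C$, and it suffices to bound, for each fixed $t$, the contribution of the $k_t$-tuples to the MMCC cost of $C$ against the contribution of the $k_t$-tuples to the \textbf{LP3} objective.

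First I would note that, restricted to motif size $k_t$, the \textbf{LP3} constraints involving $\{x_K : K\in\mathcal{K}_t(V)\}$---namely $x_K\geq z_{uv}$, $x_K\leq 1$, and $x_K\leq\frac{1}{k_t-1}\sum_{u,v\in K,\,u<v}z_{uv}$---together with the shared constraints $z_{uv}\geq 0$ and the triangle inequalities on $z$, coincide with the constraints of \textbf{LP2} after substituting $k=k_t$. Since $\alpha,\beta\leq\frac{1}{k^*}\leq\frac{1}{k_t}$ for every $t$, the hypotheses of Theorem~\ref{thm:2} hold for each $k_t$. I would then invoke the argument of that theorem (Appendix~\ref{app3}) with $k=k_t$: because it bounds the rounded cost \emph{locally}---at each pivot step it charges the misplaced $k_t$-tuples to the LP mass carried by the pairs inside or on the boundary of the cluster extracted at that step---the charging uses only the \textbf{LP2}-type inequalities for $\mathcal{K}_t(V)$ and the $z$-triangle inequalities, all of which are present in \textbf{LP3}. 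This yields, for the common partition $C$,
\[
\sum_{\substack{K\subseteq C_i\ \text{for some}\ i\\ |K|=k_t}} w_K^- \;+\; \sum_{\substack{K\not\subseteq C_i\ \text{for all}\ i\\ |K|=k_t}} w_K^+ \;\leq\; \frac{1}{\alpha\beta}\sum_{K\in\mathcal{K}_t(V)}\bigl(w_K^+x_K + w_K^-(1-x_K)\bigr).
\]

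Next I would multiply the $t$-th inequality by $\lambda_t\geq 0$ and sum over $t=1,\dots,p$: the left-hand side becomes exactly the MMCC cost of $C$, and the right-hand side becomes $\frac{1}{\alpha\beta}$ times the optimal value of \textbf{LP3}. Finally, \textbf{LP3} is a relaxation of the integer program defining MMCC---any partition yields a feasible $0/1$ point by setting $z_{uv}=1$ iff $u,v$ lie in different clusters and $x_K=1$ iff $K$ is split (and $0$ otherwise), which respects $x_K\geq z_{uv}$, $x_K\leq 1$, and $x_K\leq\frac{1}{k_t-1}\sum_{u,v\in K,\,u<v} z_{uv}$ because a split $k_t$-set contains at least $k_t-1$ separated pairs, and whose objective equals the combinatorial MMCC cost of that partition---so the optimal value of \textbf{LP3} is at most $\opt(\text{MMCC})$. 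Chaining the inequalities gives $\cost(C)\leq\frac{1}{\alpha\beta}\,\opt(\text{MMCC})$, proving the corollary.

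The main obstacle is justifying that Theorem~\ref{thm:2}'s analysis applies \emph{per motif size} rather than only in aggregate; this amounts to verifying that the region-growing charging scheme in Appendix~\ref{app3} never couples LP mass across distinct values of $t$. This is immediate once one observes that the rounding is driven purely by the $z_{uv}$ and the output partition $C$ is common to all $t$, so rewriting that proof with $k=k_t$ bounds the $k_t$-tuple errors using only the $\mathcal{K}_t(V)$-constraints and the triangle inequalities on $z$. No genuinely new estimate is required---only the bookkeeping that the per-size bounds aggregate linearly with the weights $\lambda_t$.
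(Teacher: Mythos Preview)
Your proposal is correct and follows essentially the same approach as the paper: both reduce the corollary to Theorem~\ref{thm:2} by applying its analysis per motif size (which is valid since the rounding depends only on the $z_{uv}$ and since $\alpha,\beta\le 1/k^*\le 1/k_t$ for every $t$) and then aggregating linearly over the $\lambda_t$. The paper's proof is a one-line sketch (``focus on the largest motif only, and use the previously described MCC result''), and your argument supplies exactly the bookkeeping that sketch gestures at.
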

\begin{proof} Note that the simplest way to prove this result is to focus on the largest motif only, and use the previously described MCC result. In particular, the stated result does not depend on the particular choices of the parameters $\lambda$ used.  \end{proof}

Still, one can derive more precise and stronger approximation guarantees by focusing on all motifs simultaneously, in which case the analysis becomes rather tedious and involved. For the special case of two motifs ($p=2$) with sizes $k_1=2$ and $k_2 =k$ respectively, we provide tighter approximation results in Theorem~\ref{twomotifs}. Here, both the parameters $\alpha,\beta$ depend on $\lambda$. The underlying derivations are relegated to Appendix~\ref{app4}.

\begin{theorem} \label{twomotifs}
Consider the MMCC problem with two types of motifs of sizes $k_1=2$ and $k_2 = k$. The objective function \textbf{LP3} may be rewritten as 
\begin{align*} 
\sum_{u, v\in V} \left[w_{uv}^+z_{uv}+w_{uv}^-(1-z_{uv})\right]+\lambda \sum_{K\in \mathcal{K}} \left[w_K^+x_K+w_K^-(1-x_K)\right]
\end{align*}
where $z_{uv}$ and $x_K$ are variables associated with pairs of vertices and $k$-tuples of vertices, and $\lambda$ is a parameter that can be tuned to balance the penalties induced by edges and motifs of size $k$. Let $r_0$ be a constant equal to
\begin{align*}
r_0 = \frac{k-2}{1+\lambda n^{k-1}}.
\end{align*}
Then, for any $\alpha\leq 1/k, \beta \leq 1/(k-r_0)$, and provided that the weights satisfy the probability constraint $w_K^++w_K^-=1$ for both $k$-tuples and edges (i.e., $w_{uv}^++w_{uv}^-=1$, the LP and rounding procedure of Algorithm~\ref{alg:rounding2} produce a $\frac{1}{\alpha\beta}$-approximate solution to the edge-motif MMCC problem.
\end{theorem}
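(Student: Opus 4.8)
The plan is to follow the template of the proofs of Theorems~\ref{thm:1} and~\ref{thm:2}: use the displayed objective as a lower bound on the integral optimum and then charge the cost of Algorithm~\ref{alg:rounding2} against it round by round. First one checks that the rewritten objective is exactly the objective of \textbf{LP3} with $p=2$, $(k_1,k_2)=(2,k)$ and $(\lambda_1,\lambda_2)=(1,\lambda)$, and that the $\{0,1\}$-valued variables induced by any clustering satisfy the \textbf{LP3} constraints; hence $\opt_{\text{LP}}\le\opt_{\text{MMCC}}$. We then run Algorithm~\ref{alg:rounding2} with $\alpha\le 1/k$ and $\beta\le 1/(k-r_0)$ on an optimal LP solution $(z,x)$ and show $\cost(\text{output})\le\tfrac1{\alpha\beta}\,\opt_{\text{LP}}$.

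The accounting is local. When the round with pivot $v$ outputs the cluster $C$ (either $\{v\}$ or $\mathcal{N}'_\alpha(v)$) and deletes it from $S$, attribute to this round every pair $\{u,w\}\subseteq S$ with $u\in C$ (charged $w_{uw}^-$ if $w\in C$, and $w_{uw}^+$ if $w\notin C$) and every $k$-tuple $K\subseteq S$ with $K\cap C\neq\emptyset$ (charged $\lambda w_K^-$ if $K\subseteq C$, and $\lambda w_K^+$ otherwise); the LP terms $w_{uw}^+z_{uw}+w_{uw}^-(1-z_{uw})$ and $\lambda[w_K^+x_K+w_K^-(1-x_K)]$ of the same objects are attributed to the same round, so every object is charged exactly once. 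It then suffices to verify, for each round, that the attributed algorithmic cost is at most $\tfrac1{\alpha\beta}$ times the attributed LP cost. Most sub-cases are immediate, exactly as in Theorems~\ref{thm:1}--\ref{thm:2}. In the non-singleton case: a separated object containing $v$ has $z_{vw}>\alpha$ (resp.\ $x_K\ge z_{vu}>\alpha$ for some $u\in K$ by~\eqref{LP2ineq1}), so the LP already pays at least $\alpha$ times its weight and the ratio is $\le 1/\alpha\le k$; a kept-together object containing $v$ has $z_{vu}\le\alpha$ (resp.\ $x_K\le(k-1)\alpha$ by~\eqref{LP2ineq2}), so $1-z_{vu}$ and $1-x_K\ge 1/k$ are bounded away from $0$ and the ratios are $\le\tfrac1{1-\alpha}$ and $\le k$; a kept-together pair $\{u,w\}\subseteq\mathcal{N}_\alpha(v)$ has $z_{uw}\le 2\alpha$, giving ratio $\le\tfrac1{1-2\alpha}\le\tfrac1{\alpha\beta}$ for all admissible $\alpha,\beta$ (using $k\ge 3$). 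In the singleton case every attributed object contains $v$ and is separated, and combining the threshold test $\sum_{u\in\mathcal{N}_\alpha(v)}z_{vu}>\beta\alpha\,|\mathcal{N}_\alpha(v)|$ with~\eqref{LP2ineq1}--\eqref{LP2ineq2} and the probability constraints $w^++w^-=1$ shows — as for Theorem~\ref{thm:2} — that the $z$-volume, and the $\lambda$-weighted $x$-volume, incident to $v$ are each at least $\beta\alpha$ times the number of relevant edges, and $k$-tuples, respectively, giving ratio $\le\tfrac1{\alpha\beta}$.

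The substance of the proof — and the source of $r_0$ — is the non-singleton case restricted to the objects \emph{not} incident to the pivot: the boundary ``$+$''-pairs and ``$+$''-$k$-tuples straddling $C=\mathcal{N}'_\alpha(v)$ (one vertex in $C$, one outside), and the kept-together ``$-$''-$k$-tuples $K\subseteq\mathcal{N}_\alpha(v)$ with $v\notin K$. For these the triangle inequality only gives $z_{uw}\le 2\alpha$ for pairs inside the ball and hence, via~\eqref{LP2ineq2}, $x_K\le k\alpha$, so when $\alpha=1/k$ the quantity $1-x_K$ is not bounded away from $0$ and such objects cannot be charged one at a time. Instead one amortizes: since the ball $\mathcal{N}_\alpha(v)$ has average pivot-distance at most $\beta\alpha$ in the non-singleton case, the inequalities~\eqref{LP2ineq1}--\eqref{LP2ineq2} force the $z$'s of the edges and the $x$'s of the $k$-tuples lying inside $\mathcal{N}_\alpha(v)$ to be correspondingly small (in particular $x_K\le\sum_{u\in K\setminus\{v\}}z_{vu}$), so the LP is obliged to pay a definite amount for this dense local structure, against which the bad straddling and contained costs are discharged. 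Pushing this balance through, the edge contribution to the local structure tolerates $\beta\le\tfrac12$ (the threshold appearing in classical LP rounding for correlation clustering), while the $k$-tuple contribution requires $\beta\le\tfrac1k$; since a pivot vertex sits in $\Theta(n^{k-1})$ $k$-tuples but only in $n-1$ edges, these two requirements enter with relative weights $1$ and $\lambda n^{k-1}$, and the weakest admissible $\beta$ is $\tfrac1{k-r_0}$ with $r_0=\tfrac{k-2}{1+\lambda n^{k-1}}$, interpolating between $\beta\le\tfrac12$ as $\lambda\to 0$ and $\beta\le\tfrac1k$ (Theorem~\ref{thm:2}) as $\lambda n^{k-1}\to\infty$. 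Summing the per-round inequalities and absorbing the $O(1)$ residual cost of the at most $\binom{k-1}{2}$ pairs left in $S$ when the while loop halts (there are no $k$-tuples left, since $|S|<k$) yields $\cost(\text{output})\le\tfrac1{\alpha\beta}\,\opt_{\text{LP}}\le\tfrac1{\alpha\beta}\,\opt_{\text{MMCC}}$.

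I expect the amortization in the previous paragraph to be the main obstacle: one must quantify precisely how much dense local LP cost a bad straddling or contained $k$-tuple can be charged against, uniformly over the size of the ball (which is what makes $n$ appear) and over how the round's cost is split between edges and $k$-tuples, so that the interpolation produces \emph{exactly} $r_0=\tfrac{k-2}{1+\lambda n^{k-1}}$ rather than a looser constant. The separated-object bounds, the pivot-incident bounds, and the LP lower bound are routine adaptations of the arguments for Theorems~\ref{thm:1} and~\ref{thm:2}.
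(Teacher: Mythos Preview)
Your overall framework---LP lower bound, per-round local charging, reusing the case analysis of Theorem~\ref{thm:2}---is exactly what the paper does, and your treatment of the singleton case and of objects containing the pivot is correct.  However, the core paragraph misidentifies both \emph{which} sub-case produces $r_0$ and \emph{what} the improvement mechanism is, and with your stated mechanism you will not be able to derive the constant.

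First, the kept-together $k$-tuples $K\subset\mathcal{N}_\alpha(v)$ with $v\notin K$ (your ``contained'' case) are \emph{not} part of the bottleneck.  This is Case~2.1 of the Theorem~\ref{thm:2} proof, handled by the same ordering/averaging trick with constant $\tfrac{1}{1-(k-1)\alpha-\beta\alpha}$, which is already $\le\tfrac{1}{\alpha\beta}$ whenever $\alpha\le 1/k$ and $\beta\le 1/2$; since $r_0\le k-2$ forces $\beta\le 1/(k-r_0)\le 1/2$, nothing new is needed there.  The \emph{only} place the Theorem~\ref{thm:2} analysis falls short of $\tfrac{1}{\alpha\beta}$ for $\beta>1/k$ is Case~2.2.2: split $k$-tuples $K$ with $v\notin K$ whose external part $K\setminus\mathcal{N}'_\alpha(v)$ lies in $S^*=\{u\in S\setminus\mathcal{N}'_\alpha(v):z_{vu}<(1+\beta)\alpha\}$.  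There the Theorem~\ref{thm:2} bound is $\tfrac{1}{\alpha(1-(k-1)\beta)}$, which exceeds $\tfrac{1}{\alpha\beta}$ once $\beta>1/k$.

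Second, the improvement does not come from ``charging the bad straddling and contained costs against the LP's dense local structure inside the ball.''  The LP cost of objects inside the ball is already fully consumed paying for Case~2.1.  What the paper does instead is \emph{transfer slack between straddling edges and straddling $k$-tuples within Case~2.2.2 itself}.  Applying the Case~2.2.2 inequality to edges (the $k_1=2$ part) gives a per-edge LP lower bound of roughly $w_{uu'}\alpha(1-\beta)$, which has slack relative to the target $w_{uu'}\alpha\beta$; applying it to $k$-tuples gives roughly $w_K\alpha(1-(k-1)\beta)$, which is deficient.  Summing the two contributions over all straddling objects with external part in $S^*$ and redistributing, one defines
\[
r=\frac{(k-2)\,|\mathcal{N}_\alpha(v)|\,|S^*|}{|\mathcal{N}_\alpha(v)|\,|S^*|+\lambda\sum_{t=1}^{k-1}(k-t)\binom{|\mathcal{N}_\alpha(v)|}{k-t}\binom{|S^*|}{t}},
\]
so that the edges absorb an extra $(k-2-r)\alpha\beta$ each and the $k$-tuples receive an extra $r\alpha\beta$ each, yielding the combined factor $\tfrac{1}{\alpha(1-(k-1-r)\beta)}$; this equals $\tfrac{1}{\alpha\beta}$ precisely when $\beta=1/(k-r)$.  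The $n^{k-1}$ in $r_0$ therefore comes from bounding the ratio of \emph{straddling} $k$-tuples to \emph{straddling} edges (the sum in the denominator is at most $|\mathcal{N}_\alpha(v)|\,n^{k-1}$), not from pivot incidence as you wrote; minimizing $r$ over all balls gives $r\ge r_0=(k-2)/(1+\lambda n^{k-1})$.

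In short: restrict the ``hard'' analysis to Case~2.2.2 only, and replace the internal-structure charging idea by an explicit slack transfer from the edge terms to the $k$-tuple terms of the \emph{same} straddling family.  With that correction your plan goes through.
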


\section{Numerical results for small social networks}

We evaluated our (M)MCC methods on two benchmark networks from~\cite{benson2015tensor}, which were originally tested using the method described in~\cite{benson2015tensor} (henceforth termed TSC), and on the well known Zachary karate club network~\cite{zachary1977information}. In all the experiments, we considered motifs of size $k=2$ and $k=3$ only. Hence, one of the motifs are edges and for the case $k=3$, the motif may be selected based on the particular application, as subsequently described. When solving MCC, we use the~\textbf{LP2} formulation as it contains fewer constraints than~\textbf{LP1}  and thus can be solved more efficiently. We then leverage Algorithm~\ref{alg:rounding2} for downstream rounding. When solving MMCC, we use a combination of~\textbf{LP3} and  Algorithm~\ref{alg:rounding2}. 


\subsection{Partitioning layered flow networks.} The first example is what we refer to as a \emph{layered flow network} (see Figure~\ref{fig:layeredflownet}). The information flow between two layers typically follows the same direction while feedback loops are primarily contained within a layer. The task is to detect the layers in the network. To perform the layer clustering, we assign the value $1$ to each weight $w_K$ corresponding to a directed $3$-cycle (i.e., triple $\{j_1,j_2,j_3\}$ with edges directed according to $j_1\rightarrow j_2$, $j_2\rightarrow j_3$, $j_3\rightarrow j_1$, or the reverse order), encouraging the corresponding triples to lie within a layer, while we assign a arbitrary weight in $[0.41, 0.48]$ to all other type of triples. The clustering results are shown in Figure~\ref{fig:layeredflownet}. Both MCC and the method of~\cite{benson2015tensor} produce similar clustering results, which identify the layers of the network. The only difference is observed for the node with label $3$. The MCC method emphasizes the feedback loops inside a layer, and hence node $3$ is placed in the same cluster as nodes $4,5,6,7$. The other method emphasizes the importance of the direction of information flow and thus the flow from node $3$ to node $1$ does not permit clustering nodes $3,4,5,6,7$ together.
\begin{figure}[th] 
\centering\includegraphics[trim={3cm 17.7cm 8cm 2.5cm},clip, width=0.6\columnwidth]{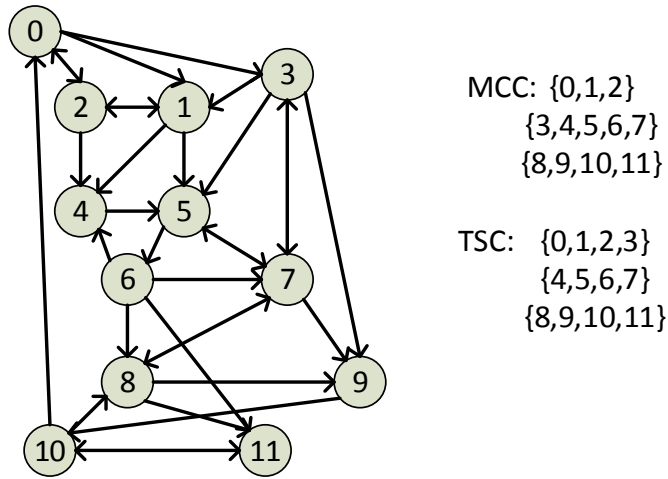} 
\caption{Example of a flow network, with layers detection performance of MCC and TSC. Left: The layered flow network; Right: The clustering results.}\label{fig:layeredflownet}  
\end{figure} 

\subsection{Anomaly detection.} Practical networks usually contain bidirectional edges, i.e., edges that allow both directions of traversal. A large number of these edges lie within directed $3$-cycles~\cite{benson2015tensor}. Hence, if a part of a network contains many directed $3$-cycles but very few bidirectional edges, it may be viewed as an anomaly. 

An illustrative example is shown in Figure~\ref{fig:anomalydetection}, in which the nodes labeled $0$-$5$ form an anomalous component which we wish to detect as it contains $8$ directed $3$-cycles without any bidirectional edges. The edges between nodes $6$-$21$ are generated according to a standard Erd\H{o}s-R\'enyi model with probability $0.25$ and to keep the figure simple, those edges were not plotted. Note that each of the nodes labeled $0-5$ has $4$ outgoing and $2$ incoming edges within the group of vertices containing $6-21$. There are $20$ directed triangles without bidirectional edges. 

To use our MCC method, we set the weights for the triangles without bidirectional edges to $1$, and those for other types of triangles to a value smaller than $0.42$. As the results shown in the Figure.~\ref{fig:anomalydetection} demonstrate, our method outperforms the TSC method in terms of detecting the anomaly. 

\begin{figure}[th] 
\centering
\includegraphics[trim={0.8cm 18.8cm 7cm 1cm},clip, width=0.6\columnwidth]{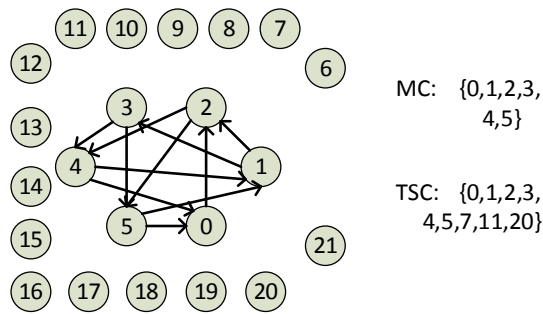} 
\caption{Anomaly detection in networks and clustering.}\label{fig:anomalydetection}  
\end{figure} 

\subsection{A benchmark social network: Zachary's karate club with two communities~\cite{zachary1977information}.} We also tested the performance of the CC, MCC  and MMCC methods
on the Zachary's karate club network. 
In the CC model, we assign weights to each pair of vertices weights depending on whether they are connected by an edge or not. For the MCC method, we focus on $3$-tuples and assign weights to the $3$-tuple weights according to whether their corresponding vertices form a triangle or a path. We use both triangles ($K_3$) and 3-paths ($P_3$) as motifs to ensure that nodes with very small degree can be clustered more accurately by examining their inclusion into important motifs involving vertices of large degree. The MMCC method uses both $2$-tuples and $3$-tuples. The weight assignments used in all these methods are listed Table~\ref{tab:karate-weights}. The result is shown in Figure.~\ref{tab:karate}. Although we  tested CC for a number of choices for the weights, we inevitably ended up with one clustering error, vertex $10$. This vertex is connected to $34$ in Cluster 1 and vertex $3$ in Cluster 2. On the other hand, the MCC and MMCC methods recovered the the ground truth clustering by taking into account the $K_3$ and $P_3$ motifs. The reason for this finding is that in social networks, vertices within a cluster typically connect to some central vertices in the same cluster (like vertex $34$ and vertex $1$). Hence, they form many triangles and $3$-paths containing the central vertices.

\begin{table}[h] 
  \caption{Weight assignments for the karate club network; $K_3$ stands for a triangle (complete graph on $3$ vertices), while $P_3$ denotes a path with three vertices.\\}
\label{tab:karate-weights}
   \centering
  \begin{tabular}{c c c c c c c}
    \toprule
    \midrule
   & \multicolumn{2}{c}{Edges}  & \multicolumn{3}{c}{Motifs}  &  \multirow{2}{*}{$\lambda$}  \\
        \cmidrule{1-6}
   Subgraphs   & Non-edges & Edges & Non-motifs & $K_3$ &  $P_3$ & \\
     \midrule
  CC & 0.47     & 1   & ---  &  ---  &  ---  & 0\\
  MCC & ---    & ---   & 0.49  &  1  &  2/3 & ---\\
  MMCC & 0.45    & 1   & 0.5  &  1  &  2/3 & 0.2\\
    \midrule
    \bottomrule
  \end{tabular}
\end{table}

%
%

\begin{figure}[h]
\centering 
   
   \includegraphics[trim={0cm 16cm 0cm 0cm}, clip, width=0.7\linewidth]{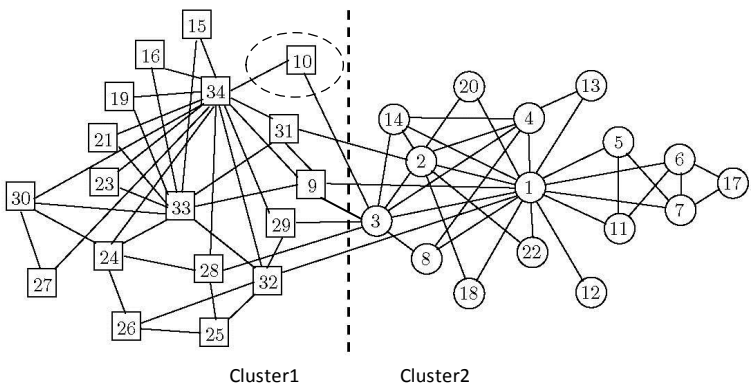}
   \caption{Clustering results for the CC, MCC and MMCC method performed on the Zachary's karate club network. Vertex $10$ is erroneously clustered by the CC method, but correctly clustered by both the MCC and MMCC methods.}
   \label{tab:karate}
%
\end{figure}
\textbf{Acknowledgment:} The authors gratefully acknowledge funding from the NSF Center of Science of Information for Science of Information, 487914, Purdue 4101-38050, NSF CCF grant 15-27636 and NIH grant 3U01CA198943-02S1.

\appendix
\section{Proof of Theorem~\ref{thm:hardness}} \label{app1}
To prove that the problem is in NP, we focus our attention on the case $(w^+_{K}, w^-_{K}) \in\{(1,0), (0,1)\}$. Since $w^+_{K}\in\{0,1\}$, as before, we refer to a triple $K$ with $w^+_{K}=1$ (respectively, $w^+_{K}=0$) as ``positive'' (respectively, ``negative''). We also use the term ``positive error'' to indicate that a positive triple is placed across clusters and ``negative error'' to indicate that a negative triple is placed within one cluster. 

Following the approach used to prove NP-hardness of CC~\cite{bansal2004correlation}, we use a reduction from the NP-complete Partition into Triangles~\cite{garey2002computers} problem. Given a (not necessarily complete) graph $G_{\triangle}=(V,E)$, containing $n$ vertices where $n$ is a multiple of 
$3$, the goal is to decide whether it can be partitioned into triangles. 

As the first step in our proof, we construct a graph $G^{w}$ that has the same vertex set as $G_{\triangle}$ and view triangles of $G^{w}$ as motifs. We set the weights of triples $G^{w}$ that correspond to triangles in $G_{\triangle}$  to $(1,0)$, and the weights of all other triples in $G^{w}$ to $(0,1)$. We solve the MCC problem over $G^{w}$ under the additional constraint that the size of each cluster is at most $3$. The existence of an efficient algorithm for solving this MCC would imply the existence of an efficient algorithm for partitioning $G_{\triangle}$ into triangles, a contradiction. As the original MCC algorithm does not necessarily generate clusters with bounded size $3$, in what follows we describe how to construct another graph, $H^{w}$, such that the triples-MCC algorithm applied on $H^{w}$ results in a bounded cluster-size run of MCC on $G^{w}$.

The basic idea behind our approach is to impose the constraint on the size of clusters in $G^{w}$ by 
adding vertices in $H^{w}$ for each triple in $G^{w}$, and then making the triples formed by the the newly added vertices positive and other triples negative. 
In this way, a cluster in the new graph $H^{w}$ with more than $3$ vertices in $G^{w}$ causes a large number of negative errors and hence 
cannot be part of an optimal clustering. 

We now describe how to construct a graph $H^w$ based on $G^w$. In addition to the vertices of $G^w$, for every triple $\{u_1,u_2,u_3\}$ in $G^w$, $H^w$ contains additional $n^5$ vertices, denoted by $C_{u_1u_2u_3}$. For simplicity of notation, write $C_{u_1u_2u_3} \cup \{u_1,u_2, u_3\} = C'_{u_1,u_2,u_3}$. Clearly, $H^w$ contains $n+n^5{n\choose 3}$ vertices. 
We classify the triples in $H^w$ into three types:
\begin{enumerate}
\item T-I triples: $\{u_1, u_2, u_3\}$, for all $u_1, u_2, u_3\in V(G^w)$. 
\item T-II triples: triples in $C'_{u_1,u_2,u_3}$ that are not T-I triples.
\item T-III triples: triples that are neither T-I triples nor T-II triples.
\end{enumerate}
The number of T-I triples is ${n \choose 3}$. As they are inherited from $G^w$, we keep their weights equal to those in $G^w$. The number of T-II triples equals ${n \choose 3}[{n^5+3 \choose 3}-1]$, and we assign the weights $(1,0)$ to them. The number of T-III triples equals ${n^5{n \choose 3} \choose 3} - {n \choose 3}{n^5+3 \choose 3}$, and we assign the weights $(0,1)$ to them.



Consider now a clustering $\mathcal{C}^*$ of $H^{w}$ of the following form:
\begin{enumerate}
\item There are ${n \choose 3}$ nonoverlapping clusters;
\item Each cluster contains one of the sets $C_{u_1u_2u_3}$ or one of the sets $C'_{u_1u_2u_3}$; 
\item Each vertex $u$ inherited from $V(G^w)$ lies in exactly one cluster.
\end{enumerate}
In the above clustering, there are no errors arising due to T-III triples, because all T-III triples are negative and $\mathcal{C}^*$ has property 2). The only errors arise from T-I triples and T-II triples. The number of errors induced by T-I triples is at most ${n \choose 3}$, while T-II triples errors in $\mathcal{C}^*$ may be grouped into two categories. First, a triple may have two vertices in $C_{u_1u_2u_3}$ and one vertex in $\{u_1,u_2,u_3\}$ that lies in another cluster. The number of this type of clustering errors is bounded from above by $n({n-1 \choose 2}-1){n^5 \choose 2}$. Second, a triple may have one vertex in $C_{u_1u_2u_3}$ and two vertices in $\{u_1,u_2,u_3\}$ that lie in another cluster. The number of this type of errors is upper bounded by ${n\choose 2}(n-3){n^5 \choose 1}$. Therefore, the total number of errors in $\mathcal{C}^*$ is bounded from above by 
\begin{align*}
n\left({n-1 \choose 2}-1\right){n^5 \choose 2}+{n\choose 2}(n-3){n^5 \choose 1}+{n \choose 3}\sim O(n^{13}).
\end{align*}

We may convert the clustering $\mathcal{C}^*$ into a partition $G^w$ based on the clustering of T-I triples. The clustering $\mathcal{C}^*$ essentially \emph{partitions the vertices of $G^w$} into clusters containing exactly three vertices. Our subsequent arguments aim to establish that the number of errors in a clustering that contains at least one cluster with at least four vertices from $V(G)$ must be larger than the number of errors induced by $\mathcal{C}^*$. 

For that purpose, consider another clustering of $H^w$, denoted by $\mathcal{C}'$. First, we show that in order for $\mathcal{C}'$ to have fewer errors than $\mathcal{C}^*$, the size of any cluster in $\mathcal{C}'$ must lie in the interval $[n^5-n^4, n^5+n^4]$. Suppose that on the contrary there exists a cluster containing more that $n^5+n^4$ vertices. Then, there are at least ${n^5 \choose 2}n^4\sim\Omega(n^{14})$ negative errors caused by placing T-III triples into this cluster. Furthermore, each cluster must contain at least $n^5-n^4$ vertices of a clique, otherwise there are at least ${n^5 \choose 2}n^4\sim\Omega(n^{14})$ positive errors generated by splitting the T-II triples. 
Second, note the each vertex in $V(G^w)$ belongs to ${n-1\choose 2}$ different triples of $G^w$. Since the size of each cluster of $\mathcal{C}'$ is smaller than $n^5+n^4$, for each vertex in $V(G)$, the number of negative errors caused by splitting the T-II triples that contains this vertex and two vertices from some $C_{u_1u_2u_3}$  is lower bounded by ${n^5 \choose 2}{n-1\choose 2}-{n^5 \choose 2}-{n^4 \choose 2}$. 

Assume now that there exists a cluster of $\mathcal{C}'$ that contains four vertices inherited from $V(G^w)$, say $\{u_1,u_2,u_3,u_4\}$. Then, as the size of the cluster is lower bounded by  $n^5-n^4$, from the pigeonhole principle it follows that there exists at least one vertex in $\{u_1,u_2,u_3,u_4\}$, say $j_1$, and at least $\frac{1}{4}(n^5-n^4)$ other vertices that do not lie in one of the sets $C_{u_1 u' u''}$ for some $u',u''\in v(G^w)$. Hence, the number of negative errors caused by T-III triples within this cluster is at least ${\frac{1}{4}(n^5-n^4) \choose 2}$. The total number of errors induced by such a clustering is therefore at least 
\begin{align*}
n{n^5 \choose 2}\left({n-1\choose 2}-1\right)-n{n^4 \choose 2}+{\frac{1}{4}(n^5-n^4) \choose 2},
\end{align*}
which is larger than the number of errors in the clustering $\mathcal{C}^*$, for $n$ sufficiently large. 
Therefore, the optimal triangle-clustering has to be of the form of $\mathcal{C}^*$, imposing a constraint on the size of clusters in $G^{w}$.

\section{Proof of Theorem~\ref{thm:1}} \label{app2}

Since we assume that the weights satisfy the probability constraint $w_{K}^+ + w_{K}^-=1$,  we will use $w_K$ to refer to $w_K^+$ and $1-w_K$ to refer to $w_K^-$.

Let $\mathcal{N}_{\alpha}(v)$ be the set defined in the rounding
procedure. If $\mathcal{N}_{\alpha}(v)\neq \emptyset$,
$\mathcal{N}_{\alpha}(v)$ contains at least $k-1$ elements, because if
$x_{K}\leq\alpha$ for some $k$-tuple $K$, then all its elements
(except possibly $v$) lie in $\mathcal{N}_{\alpha}(v)$. Let $\mathcal{N}'_{\alpha}(v)=\mathcal{N}_{\alpha}(v)\cup \{v\}$.
For convenience, we also define, given a pivot vertex $v$ and a $k$-tuple $K$ that contains $v$, $y_K = \sum_{u \in K/\{v\}} y_{vu}$. Furthermore, we let
$$K_{min}^{(uv)} = \arg\min_{K: \,u,v\in K} x_{K}.$$ Thus, by using the LP constraint and the definition of $y_{uv}$, we have 
\begin{align}\label{eq:yk-ineq}
\frac{1}{k-1}y_K  \leq x_K \leq \sum_{u\in K/\{v\}} x_{K_{min}^{(uv)}} = y_K.
\end{align}
Let $\kay_v$ be the set of all the $k$-tuples $K$ such that $K\subseteq \mathcal{N}'_{\alpha}(v),\, K\ni v$. When $v$ is a pivot vertex and $K\in \kay_v$, we know that 
\begin{align}\label{eq:yk-bound}
y_K\leq (k-1)\alpha\leq \frac{k-1}{k}.
\end{align}
The following proof often uses another form of the constraint in the underlying LP, i.e., $x_{K_1}\geq x_{K_3}-x_{K_2}$ for any $(K_1, K_2, K_3)\in\Upsilon$.

Next, we compare the rounding cost and the LP cost for different types of outputs of the algorithm. All possible cases and their corresponding approximation constants are listed in Table~\ref{tab:proofroadmap1}. 

\begin{table}[t]
\footnotesize
\begin{tabular}{c|c|c|c|c|c}
\hline
Output & Cost of & $K\ni v$ & Additional conditions &  Approx. constant & Case\# \\
\hline
\multirow{2}{*}{$\{v\}$} & splitting & yes & $K\cap[S/\mathcal{N}'_{\alpha}(v)]\neq \emptyset$ & $1/\alpha$ & Case 1 \\
\cline{2-6}
& splitting & yes & $K\cap[S/\mathcal{N}'_{\alpha}(v)]= \emptyset$ & $2/\alpha$ & Case 1 \\
\cline{1-6}
\multirow{10}{*}{$\mathcal{N}'_{\alpha}(v)$} & joint clustering & yes & --- & $1/[1-(k-1)\alpha]$ & Case 2.1 \\
\cline{2-6}
 & \multirow{2}{*}{joint clustering} & \multirow{2}{*}{no}  & Given $u\in \mathcal{N}_{\alpha}(v)$, $\exists K'\subseteq \mathcal{N}'_{\alpha}(v)$, & \multirow{2}{*}{$2/[2-(2k-1)\alpha]$} & \multirow{2}{*}{Case 2.1.1} \\
 & &  & $K'\ni v, u$, $x_{K'}\leq  \alpha/2$ & &\\
 \cline{2-6}
 & \multirow{2}{*}{joint clustering} & \multirow{2}{*}{no}  & Given $u\in \mathcal{N}_{\alpha}(v)$, $\forall K'\subseteq \mathcal{N}'_{\alpha}(v)$, & \multirow{2}{*}{$2/[2-(2k-1)\alpha]$} & \multirow{2}{*}{Case 2.1.2} \\
 & &  & $K'\ni v, u$, $x_{K'}>  \alpha/2$ & &  \\
  \cline{2-6}
 & splitting & yes  & --- & $1/\alpha$ & Case 2.2 \\
  \cline{2-6}
 & \multirow{2}{*}{joint clustering} & \multirow{2}{*}{no}  & $\exists K'$, $K'\ni v, x_{K'} \geq 1- \alpha/2$ & \multirow{2}{*}{$2/[2-(2k-1)\alpha]$} & \multirow{2}{*}{Case 2.2.1} \\
 & &  & $K'/\mathcal{N}'_{\alpha}(v) = K/\mathcal{N}'_{\alpha}(v)$  & &  \\
   \cline{2-6}
 & \multirow{2}{*}{splitting} & \multirow{2}{*}{no}  & $\forall K'$, $K'\ni v, x_{K'} < 1- \alpha/2$  & \multirow{2}{*}{$2/\alpha$} & \multirow{2}{*}{Case 2.2.2} \\
 & &  &  $K'/\mathcal{N}'_{\alpha}(v) = K/\mathcal{N}'_{\alpha}(v)$ &  &  \\
 \hline
\end{tabular}
\vspace{0.1in}
\caption{Overview of the different cases studied in the proof of Theorem~\ref{thm:1}. Output refers to the output of the algorithm; cost of splitting or joint clustering refers to the cost of splitting the $k$-tuple in $K$ or placing all of $K$ into into the output cluster. Additional conditions are specifics of the case under investigation.}
\label{tab:proofroadmap1}
\end{table}

\textbf{Case 1:} \emph{The output is the singleton cluster $\{v\}$}. The
clustering cost when outputting a singleton $\{v\}$ is $\sum_{K\subset
  \mathcal{K}(S): v\in K}w_{K}$ while the LP cost is $\sum_{K\subset
  \mathcal{K}(S): v\in K}(1-w_{K})(1-x_K)+w_{K}x_K$.

If $K\cap \left[S/\mathcal{N}'_{\alpha}(v)\right]\neq \emptyset$, we
have $x_{K}>\alpha$, so charging each such $k$-tuple $\frac{1}{\alpha}
w_{K}x_K$ times its LP-cost compensates for the
cluster-cost. Therefore, it suffices to consider the $k$-tuples $K\in \kay_v$. 
For $K\in \kay_v$, the LP cost
is bounded by
\begin{align*}
  &\sum_{K\in\kay_v}(1-w_{K})(1-x_K)+w_{K}x_K
  \geq \sum_{K \in \kay_v}(1-w_{K})(1-y_K)+w_{K}\frac{1}{k-1}y_K \\
  =&\sum_{K \in \kay_i}w_K\left[\frac{k}{k-1}y_K-1\right]+(1-y_K) 
  \geq \sum_{K \in \kay_v}\frac{1}{k-1}y_K
  \geq\frac{\alpha}{2}{|\mathcal{N}_{\alpha}(v)| \choose k-1},  
\end{align*}
where the first inequality is due to \eqref{eq:yk-ineq}, the second inequality is due to \eqref{eq:yk-bound} and $w_{K}\leq 1$, while the third inequality is due to the condition that the algorithm outputs a singleton cluster
$\{v\}$. Therefore, charging $\frac{2}{\alpha}$ for the $k$-tuple is
enough to compensate for the clustering cost.

\textbf{Case 2:} \emph{The output is the cluster $\mathcal{N}'_{\alpha}(v)$}. 

\textbf{Case 2.1:} First, consider the cost of the $k$-tuples inside the cluster. If $v\in K$, then we have $K\in \kay_v$ and thus $x_K\leq  y_K\leq
(k-1)\alpha$. So, charging $\frac{1}{1-(k-1)\alpha}$ for this tuple suffices to compensate the cluster-cost.

If $v\notin K$, order the vertices in $\mathcal{N}_{\alpha}(v)$ in
such a way that for any $u_1, u_2\in \mathcal{N}_{\alpha}(v)$, $u_1\prec u_2$
iff $y_{vu_1}< y_{vu_2}$ and assign an arbitrary order ($u_1\prec u_2$) when
the equality ($y_{vu_1}=y_{vu_2}$) holds.

For each vertex $u \in \mathcal{N}_{\alpha}(v)$, let $R_u=\{u' \in
\mathcal{N}_{\alpha}(v) \st u'\preceq u\}$, and let $\kay_v^{(u)}$ be the set of
$k$-tuples $K \in  \mathcal{N}_{\alpha}(v)$ such that $u$ is the largest vertex of $K$
according to $\prec$. Thus, if $K \in \kay_v^{(u)}$, then $u \in K$ and $K
\subset R_u $.

Note that because of the order, we have $\sum_{u'\in R_u} y_{vu'}\leq
\frac{\alpha}{2}|R_u|$.  Now for all $u\in \mathcal{N}_{\alpha}(v)$,
let us consider the total cost of the $k$-tuples in $R_u$. The
corresponding cluster-cost is $\sum_{K \in \kay_v^{(u)}}1-w_{K}$ while the LP
cost is $\sum_{K \in \kay_v^{(u)}}(1-x_{K})(1-w_{K})+x_{K}w_{K}$.

Next, let $\kay_v'$ be the set of $k$-tuples $K' \subset \mathcal{N}'_{\alpha}(v)$
with $v,u \in K'$.

\caze{2.1.1}{There is a clique $K' \in \kay_v'$ with
$x_{K'} \leq \frac{\alpha}{2}$.} 

Let $K^* = (K / \{u\}) \cup \{v\}$. Observe that $v \in K_v' \cap K^*$
and that $K \subset K_v' \cup K^*$.
Hence, the LP constraints imply that for all $K \in \kay_v^{(u)}$, we have
\[ x_{K}\leq x_{K'}+x_{K^*}\leq \frac{\alpha}{2}+(k-1)\alpha=\frac{(2k-1)\alpha}{2}. \]
So, charging $\frac{2}{2-(2k-1)\alpha}$ for each $k$-tuple in $ \kay_v^{(u)}$ is enough to compensate
the cluster-cost.

\caze{2.1.2}{For all $K' \in \kay_v'$, we have $x_{K'} >
  \frac{\alpha}{2}$.}  Let $K \in  \kay_v^{(u)}$, and let $\{u_1,
\ldots, u_{k-1}\}$ be the vertices in $K/\{u\}$. Let $y_K = \sum_{u_j \in
  K/\{g\}}y_{vu_j}$. For each $j \in \{1, \ldots, k-1\}$, let $K_j = (K /
\{u_j\}) \cup \{v\}$.  As each $K_j \in \kay_v'$, the LP constraints imply:
\begin{align*}
  &1-x_{K} \geq 1-\min_{j \in \{1, \ldots, k-1\}}\{x_{K_{min}^{(vu_j)}}+x_{K_j}\}\\
  &=1-\min_{j \in \{1, \ldots, k-1\}}\{y_{vu_j}+x_{K_j}\}\geq
  1-\left[\frac{1}{k-1}y_K+\frac{1}{k-1}\sum_{j=1}^{k-1}x_{K_j}\right]  
\end{align*}
and 
\[ x_{K}\geq  \max_{j \in \{1, \ldots, k-1\}}\{x_{K_j}-x_{K_{min}^{(vu_j)}}\} = \max_{j \in \{1, \ldots, k-1\}}\{x_{K_j}-y_{vu_j}\}\geq\frac{1}{k-1}\sum_{j=1}^{k-1}x_{K_j}-\frac{1}{k-1}y_K. \]
Let $\sigma = \sum_{j=1}^{k-1}x_{K_j}$. Manipulating these inequalities yields, for each $K \in \kay_v^{(u)}$,
\begin{equation}
  \label{eq:sigma}
  (1-w_{K})(1-x_{K})+w_{K}x_{K} \geq (1-w_{K})\left(1-\frac{2}{k-1}\sigma\right)
   -\frac{1}{k-1}y_K + \frac{1}{k-1}\sigma.  
\end{equation}
The LP constraints yield $x_{K_j} \leq (k-1)\alpha$ for each $j \in
\{1, \ldots, k-1\}$, since $i \in K_j$ for each $j$, by the same
argument used to establish inequality~\eqref{eq:yk-ineq}. Since each
$K_j \in \kay_v'$, we have $\alpha/2 \leq x_{K_j} \leq (k-1)\alpha$ for
each $j$, so that $\sigma \in [(k-1)\frac{\alpha}{2}, (k-1)^2\alpha]$.
The inequality~\eqref{eq:sigma} is linear in $\sigma$, so we study its
behavior when $\sigma$ is an endpoint of this interval.
When $\sigma = (k-1)\frac{\alpha}{2}$, we obtain
\begin{align*}
  (1-w_K)(1-x_K) + w_Kx_K &\geq (1-w_K)(1-\alpha) + \frac{\alpha}{2} - \frac{1}{k-1}y_K,
\end{align*}
and when $\sigma = (k-1)^2\alpha$, we obtain
\begin{align*}
  &(1-w_K)(1-x_K) + w_Kx_K \geq (1-w_K)(1 - 2(k-1)\alpha) + (k-1)\alpha - \frac{1}{k-1}y_K \\
  &= (1-w_K)(1-2(k-1)\alpha) + (k-\frac{3}{2})\alpha + \frac{\alpha}{2} - \frac{1}{k-1}y_K 
  \geq (1-w_K)(1-2(k-1)\alpha) + \\
  &(1-w_K)(k-\frac{3}{2})\alpha + \frac{\alpha}{2} - \frac{1}{k-1}y_K 
  = (1 - w_K)(1 - (k-\frac{1}{2})\alpha) + \frac{\alpha}{2} - \frac{1}{k-1}y_K.
\end{align*}
Since $k \geq 2$ we clearly have $(1-w_K)(1-(k-\frac{1}{2})\alpha) +
\frac{\alpha}{2} \leq (1-w_K)(1-\alpha) + \frac{\alpha}{2}$, so by
linearity, we also have
\begin{equation}
  \label{eq:egbound}
  (1-w_K)(1-x_K) + w_kx_K \geq (1-w_K)(1-(k-\frac{1}{2})\alpha) + \frac{\alpha}{2} - \frac{1}{k-1}y_K  
\end{equation}
for all $K \in  \kay_v^{(u)}$. Now, recall that $\sum_{u' \in R_u}y_{vu'} \leq \frac{\alpha}{2}|R_u|$; as
every vertex in $R_u$ appears in exactly ${\sizeof{R_u}-1 \choose k-2}$ $k$-tuples of $\kay_v^{(u)}$,
this implies that
\begin{align*}
  \frac{1}{k-1}\sum_{K \in \kay_v^{(u)}}y_K = \frac{1}{k-1}{\sizeof{R_u}-1 \choose k-2}\sum_{u' \in R_u}y_{vu'} 
  \leq \frac{\alpha}{2} {\sizeof{R_u} - 1 \choose k-2} \frac{\sizeof{R_u}}{k-1} 
  = \frac{\alpha}{2}{\sizeof{R_u} \choose k-1} = \frac{\alpha}{2}\sizeof{\kay_v^{(u)}}.
\end{align*}
Thus, summing inequality~\eqref{eq:egbound} over all tuples in $\kay_v^{(u)}$ yields the following lower
bound on the total LP-cost of these tuples:
\begin{align*}
  &\sum_{K \in \kay_v^{(u)}}[(1-w_K)(1-x_K) + w_kx_K] \geq \sum_{K \in \kay_v^{(u)}}\left[(1-w_K)(1-(k-\frac{1}{2})\alpha) + \frac{\alpha}{2} - \frac{1}{k-1}y_K\right] \\
  &\geq \sum_{K \in \kay_v^{(u)}}\left[(1-w_K)(1-(k-\frac{1}{2})\alpha)\right] = (1 - (k - \frac{1}{2}))\alpha \sum_{K \in \kay_v^{(u)}}(1-w_K).
\end{align*}
Therefore, charging $\frac{2}{2-(2k-1)\alpha}$ for each $k$-tuple in
$\kay_v^{(u)}$ suffices to compensate the cluster-cost.

\textbf{Case 2.2:} \emph{Compensating the cost of splitting a $k$-tuple.} Each tuple $K$ split during clustering incurs a cluster-cost of
$w_K$ and an LP-cost of $x_Kw_K + (1-x_K)(1-w_K)$.  First, suppose that
$K$ is a split $k$-tuple.
Since $K$ was split, $x_K > \alpha$, and charging $\frac{1}{\alpha}$ times the LP cost pays
for such a $K$.

Let $S' \subset S/\mathcal{N}'_{\alpha}(v)$ be such that $|S'|\leq k-1$. Furthermore, let $\kay_{v}^{(S')}$ be the set of split tuples $K$ such that $v\notin K$ and $K/\mathcal{N}'_{\alpha}(v)=S'$. According to the definition of $S'$, for any split tuple $K$, there is a corresponding $S'$. We show that the total cluster-cost of the
tuples in $\kay_{v}^{(S')}$ is at most a constant times their total LP-cost. To establish the claim, let
$\mathcal{S}_{\mathcal{N}}$ be the collection of all subsets $S_{\mathcal{N}} \subset \mathcal{N}_{\alpha}(v)$ with
$\sizeof{S_{\mathcal{N}}} = k-1-\sizeof{S'}$.

\caze{2.2.1}{There is some $S_{\mathcal{N}} \in \mathcal{S}_{\mathcal{N}}$ such that $x_{\{v\} \cup S_{\mathcal{N}}
    \cup S'} \geq 1 - \frac{\alpha}{2}$.} For each $K \in \kay_{v}^{(S')}$, let
$\tilde{S} = K \cap \mathcal{N}_{\alpha}(v)$, and take an arbitrary set $\bar{S} \subset \mathcal{N}_{\alpha}(v)/\tilde{S}$
with $\sizeof{\bar{S}} = k-1-\sizeof{\tilde{s}}$.  We have $x_{\{v\} \cup \tilde{S}
  \cup \bar{S}} \leq (k-1)\alpha$ and thus
\[
  x_K \geq x_{\{v\} \cup S' \cup S_{\mathcal{N}}} - x_{\{v\} \cup \tilde{S} \cup \bar{S}} \geq 1 - \frac{\alpha}{2} - (k-1)\alpha
  = \frac{2 - (2k-1)\alpha}{2},
\]
and in particular $x_K \geq \frac{2-(2k-1)\alpha}{2}$ for all $K \in
\kay_{v}^{(S')}$.  Thus, charging $\frac{2}{2-(2k-1)\alpha}$ times the LP-cost
to each $K \in\kay_{v}^{(S')}$ pays for the cluster-cost of all such edges.

\textbf{Case 2.2.2:} \emph{For all $S_{\mathcal{N}} \in \mathcal{S}_{\mathcal{N}}$, $x_{\{i\} \cup S_{\mathcal{N}} \cup S'} < 1 - \frac{\alpha}{2}$.}
Take any $K \in \kay_{v}^{(S')}$ and let $\tilde{S} = K \cap \mathcal{N}_{\alpha}(v)$. Suppose that $\tilde{S}=\{u_1, u_2, ..., u_{|\tilde{S}|}\}$. For each $u_j\in \tilde{S}$, let $K_j = (K / \{u_j\}) \cup \{v\}$. Note
that each tuple $K_j$ is a split tuple. We have:
\begin{gather*}
  1 - x_{K} \geq 1 - \min_{u_j \in \tilde{S}}[x_{K_{min}^{(vu_j)}}+ x_{K_j}] = 1 - \min_{u_j \in \tilde{S}}[y_{vu_j}+ x_{K_j}]
  \geq 1 - \frac{1}{\sizeof{\tilde{S}}}\sum_{u_j \in \tilde{S}}(y_{vu_j} + x_{K_j}); \\
  x_{K} \geq  \max_{u_j \in \tilde{S}}[x_{K_j} - x_{K_{min}^{(vu_j)}}]= \max_{u_j \in \tilde{S}}[x_{K_j} - y_{vu_j}]
  \geq \frac{1}{\sizeof{\tilde{S}}}\sum_{u_j \in \tilde{S}}(x_{K_j} - y_{vu_j}).
\end{gather*}
Let $\sigma_x = \sum_{u_j \in \tilde{S}}x_{K_j}$ and let $\sigma_y = \sum_{u_j \in \tilde{S}}y_{K_j}$.
The inequalities above yield the following lower bound on the LP-cost of $K$:
\begin{align}\label{ineq:lp1}
 (1 - w_K)(1 - x_K) + w_Kx_K \geq w_K\left[\frac{2}{\sizeof{\tilde{S}}}\sigma_x - 1\right] + 1 - \frac{1}{\sizeof{\tilde{S}}}(\sigma_x + \sigma_y).
\end{align}
We have $x_{K_j} \geq \alpha$ by definition and $x_{K_j}\leq 1-\frac{2}{\alpha}$ due to the assumptions made for this case. Thus, we have $\sigma_x \in [\alpha\sizeof{\tilde{S}}, (1-\frac{\alpha}{2})\sizeof{\tilde{S}}]$. As the lower bound in inequality~\eqref{ineq:lp1} is linear in $\sigma_x$, we study the behavior of the bound at the endpoints. When $\sigma_x =
\alpha\sizeof{\tilde{S}}$, we have
\begin{align*}
&(1-w_K)(1-x_K) + w_Kx_K \geq (2\alpha-1)w_K + 1 - \alpha - \frac{\sigma_y}{\sizeof{\tilde{S}}} \\
&\geq (2\alpha-1)w_K + (1-\frac{3\alpha}{2})w_K + \frac{\alpha}{2} - \frac{\sigma_y}{\sizeof{\tilde{S}}} \\
&= \frac{\alpha}{2}w_K + \frac{\alpha}{2} - \frac{\sigma_y}{\sizeof{\tilde{S}}}.
\end{align*}
Here, we used the fact that $\alpha < 2/3$. When $\sigma_x = (1-\frac{\alpha}{2})\sizeof{\tilde{S}}$, we obtain
\[ (1-w_K)(1-x_K) + w_Kx_K \geq (1-\alpha)w_K + \frac{\alpha}{2} - \frac{\sigma_y}{\sizeof{\tilde{S}}}. \]
Since $\alpha \leq 2/3$, we have $1-\alpha \geq \frac{\alpha}{2}$, so that the inequality
\begin{equation}
  \label{eq:sigmay}
  (1-w_K)(1-x_K) + w_Kx_K \geq \frac{\alpha}{2}w_K + \frac{\alpha}{2} - \frac{\sigma_y}{\sizeof{\tilde{S}}}
\end{equation}
holds for $\sigma_x$ at both endpoints of the interval, and thus holds for all $K \in \kay_v^{(S')}$. With $\tilde{S} = K \cap \mathcal{N}_{\alpha}(v)$ as before, we have $\sizeof{\tilde{S}} = k - \sizeof{S'}$ for all $K \in \kay_v^{(S')}$, and indeed
the map $K \mapsto (K \cap \mathcal{N}_{\alpha}(v))$ is a bijection from $\kay_v^{(S')}$ to ${\mathcal{N}_{\alpha}(v) \choose k - \sizeof{S'}}$. Since
each vertex of $\mathcal{N}_{\alpha}(v)$ lies in exactly ${\sizeof{\mathcal{N}_{\alpha}(v)}-1 \choose k - \sizeof{S'}-1}$ of the sets in ${\mathcal{N}_{\alpha}(v) \choose k - \sizeof{S'}}$,
we have
\begin{align*}
  &\sum_{K \in \kay_v^{(S')}}\frac{1}{\sizeof{\tilde{S}}}\sum_{u_j \in \tilde{S}} y_{vu_j} = \sum_{\tilde{S} \in {|\mathcal{N}_{\alpha}(i)| \choose k - \sizeof{S'}}}\frac{1}{k - \sizeof{S'}}\sum_{u_j \in \tilde{S}}y_{vu_j} = \frac{1}{k - \sizeof{S'}}{\sizeof{\mathcal{N}_{\alpha}(i)}-1 \choose k - \sizeof{S'}-1}\sum_{u\in \mathcal{N}_{\alpha}(v)} y_{vu} \\
  &\leq \frac{1}{k - \sizeof{S'}}{\sizeof{\mathcal{N}_{\alpha}(v)}-1 \choose k - \sizeof{S'}-1}\frac{\alpha\sizeof{\mathcal{N}_{\alpha}(v)}}{2}= \frac{\alpha}{2}{\sizeof{\mathcal{N}_{\alpha}(v)} \choose k - \sizeof{S'}} \qquad = \frac{\alpha}{2}\sizeof{\kay_v^{(S')}}.
\end{align*}
Thus, summing inequality~\eqref{eq:sigmay} over all tuples in $\kay_v^{(S')}$ yields the following lower bound on the total LP-cost of the underlying tuples:
\begin{align*}
  \sum_{K \in \kay_v^{(S')}}\left[(1-w_K)(1-x_K) + w_Kx_K\right] &\geq \sum_{K \in \kay_v^{(S')}}\left[\frac{\alpha}{2}w_K + \frac{\alpha}{2} - \frac{\sigma_y}{\sizeof{\tilde{S}}}\right] \geq \frac{\alpha}{2}\sum_{K \in \kay_v^{(S')}}w_K.
\end{align*}
Thus, charging each tuple in $\kay_v^{(S')}$ a factor of $\frac{2}{\alpha}$ times its LP-cost is enough
to pay for the cluster-cost.

In summary, if $\alpha= 1/k$ and we define
$c=\max\{\frac{1}{1-\alpha}, \frac{1}{1-(k-1)\alpha},
\frac{2}{2-(2k-1)\alpha}, \frac{2}{\alpha}\}=\frac{2}{\alpha}= 2k$,
then Algorithm~\ref{alg:rounding1} charges each $k$-tuple at most a factor of $2k$ times its LP.

\section{Proof of Theorem~\ref{thm:2}} \label{app3}

We continue to use the notation introduced in Appendix~\ref{app2}. In particular, we let $\mathcal{N}'_{\alpha}(v)=\mathcal{N}_{\alpha}(v)\cup \{v\}$ and let $\kay_v$ be the set of all $k$-tuples $K$ such that  $K\subseteq \mathcal{N}'_{\alpha}(v), \, K \ni v$. The following proof often uses some immediate consequences of the LP constraints; 
here we adopt the convention that $z_{uu} = 0$ for all $u \in V$:
\begin{enumerate}
\item $z_{u_1u_2}\geq z_{u_1u_3}-z_{u_2u_3}$ for any $u_1, u_2, u_3\in V$;
\item $x_{K}\geq \max_{uu'\in K} z_{uu'} \geq \max_{u,u' \in K}[z_{vu}-z_{vu'}]$, for any $v\in V$;
\item $x_{K} \leq \frac{1}{k-1}\sum_{u,u'\in K, u< u'}z_{uu'}\leq \frac{1}{k-1}\sum_{u,u' \in K, u<u'}(z_{vu}+z_{vu'})\leq \sum_{u\in K}z_{vu}$ for any $v\in V$.
\end{enumerate}
As before, we prove the approximation guarantees by comparing the rounding cost and the LP cost. An overview of the different cases encountered and the corresponding approximation constants is provided in Table~\ref{tab:proofroadmap2}. 

\begin{table}[t]
\footnotesize
\begin{tabular}{c|c|c|c|c|c}
\hline
Output & Cost of & $K\ni v$ & Additional conditions &  Approx. constant & Case\# \\
\hline
\multirow{2}{*}{$\{v\}$} & splitting & yes & $K\cap[S/\mathcal{N}'_{\alpha}(v)]\neq \emptyset$ & $1/\alpha$ & Case 1 \\
\cline{2-6}
& splitting & yes & $K\cap[S/\mathcal{N}'_{\alpha}(v)]= \emptyset$ & $1/(\alpha\beta)$ & Case 1 \\
\cline{1-6}
\multirow{6}{*}{$\mathcal{N}'_{\alpha}(v)$} & joint clustering & yes & --- & $1/[1-(k-1)\alpha]$ & Case 2.1 \\
\cline{2-6}
 &joint clustering& no  & Given $u\in \mathcal{N}_{\alpha}(v)$, $z_{vu}\leq \alpha\beta$, & $1/[1-k\alpha\beta]$ &Case 2.1.1 \\
 \cline{2-6}
 & joint clustering & no & Given $u\in \mathcal{N}_{\alpha}(v)$, $z_{vu}> \alpha\beta$, & $1/[1-(k-1)\alpha-\alpha\beta]$ & Case 2.1.2 \\
  \cline{2-6}
 & splitting & yes  & --- & $1/\alpha$ & Case 2.2 \\
  \cline{2-6}
 &splitting & no  & $\exists u \in K/\mathcal{N}_{\alpha}(v)$, $z_{vu}\geq (1+\beta)\alpha$ &$1/(\alpha\beta)$ & Case 2.2.1 \\
   \cline{2-6}
 & splitting & no  & $\forall u  \in K/\mathcal{N}_{\alpha}(v)$, $z_{vu}< (1+\beta)\alpha$  &$1/[\alpha(1-(k-1)\beta)]$ & Case 2.2.2 \\
 \hline
\end{tabular}
\vspace{0.1in}
\caption{Overview of the different cases studied in the proof of Theorem~\ref{thm:2}. Output refers to the output of the algorithm; cost of splitting or joint clustering refers to the cost of splitting the $k$-tuple in $K$ or placing all of $K$ into into the output cluster. Additional conditions are specifics of the case under investigation.}
\label{tab:proofroadmap2}
\end{table}

\textbf{Case 1:} \emph{The output is the singleton cluster $\{v\}$}. 
 The clustering cost when outputting a singleton $\{v\}$ is $\sum_{K\subset \mathcal{K}(S): v\in K}w_{K}$ while the LP cost is $\sum_{K\subset \mathcal{K}(S): i\in K}(1-w_{K})(1-x_K)+w_{K}x_K$. 

 If $K\cap[S/\mathcal{N}'_{\alpha}(v)]\neq \emptyset$, we have
 $x_{K}>\alpha$, so charging each such $k$-tuple $\frac{1}{\alpha}$
 times its LP-cost compensates for the cluster-cost. Therefore, it
 suffices to consider the $k$-tuples $K \in \kay_v$.
 
 For any $K \in \kay_v$, we have $\frac{1}{k-1}\sum_{u \in K/\{v\}}z_{vu}\leq x_{K}\leq \sum_{u \in K/\{v\}}z_{vu}$, where the inequalities are based on the LP constraints. By observing that $z_{vu}\leq \alpha$, we have the following bound on the LP cost of $K$:
\begin{align*}
(1-w_{K})(1-x_K)+w_{K}x_K
&\geq(1-w_{K})(1-\sum_{u \in K/\{v\}}z_{vu})+w_{K}\frac{1}{k-1}\sum_{u \in K/\{v\}}z_{vu} \\
&=w_K\left[\left(\frac{k}{k-1}\sum_{u\in K/\{v\}}z_{vu}\right)-1\right]+(1-\sum_{u \in K/\{v\}}z_{vu}).
\end{align*}
Since each $z_{vu}$ for $u \in K$ satisfies $z_{vu} \leq \alpha \leq 1/k$, the quantity
in square brackets is negative, so that $w_K \leq 1$ implies
\[
(1-w_{K})(1-x_K) + w_Kx_K \geq \frac{1}{k-1}\sum_{u \in K/\{v\}}z_{vu}.
\]
Summing over all $K \in \kay_v$, we see that
\[ \sum_{K \in \kay_v}[(1-w_K)(1-x_K) + w_Kx_K] \geq \sum_{K \in \kay_v}\sum_{u \in K/\{v\}}\frac{1}{k-1}z_{vu} \geq \alpha\beta{\sizeof{\mathcal{N}_{\alpha}(v)} \choose k-1}, \]
where the last inequality follows from the condition $\sum_{u \in \mathcal{N}_{\alpha}(v)}z_{vu} > \beta\alpha\sizeof{\mathcal{N}_{\alpha}(v)}$
that causes the algorithm to output $\{v\}$ as a singleton cluster.

Therefore, charging $\frac{1}{\alpha\beta}$ times the LP-cost to each
$k$-tuple in $\kay_v$ is enough to compensate for the total clustering cost
of the tuples in $\kay_v$.

\textbf{Case 2:} \emph{The output is the cluster $\mathcal{N}'_{\alpha}(v)$}.

\caze{2.1} \emph{First, consider the cost of the $k$-tuples inside the cluster.}
If $v\in K$, then we have $x_K\leq \sum_{u \in K/\{v\}}z_{vu}\leq (k-1)\alpha$, so charging
$\frac{1}{1-(k-1)\alpha}$ for this tuple suffices to compensate the
cluster-cost.

If $v\notin K$, order the vertices in $\mathcal{N}_{\alpha}(v)$ in
such a way that for any $u,u'\in \mathcal{N}_{\alpha}(v)$, $u\prec u'$
iff $z_{vu}< z_{vu'}$ and assign an arbitrary order ($u\prec u'$) when
the equality ($z_{vu}=z_{vu'}$) holds.

For each vertex $u \in \mathcal{N}_{\alpha}(v)$, let $R_u=\{u' \in
\mathcal{N}_{\alpha}(v) \st u'\preceq u\}$, and let $\kay_v^{(u)}$ be the set of
cliques $K \in \kay_v^{(u)}$ such that $l$ is the largest vertex of $K$
according to $\prec$. Thus, if $K \in \kay_v^{(u)}$, then $u \in \kay_v^{(u)}$ and $K
\subset R_u $.

Note that because of the order, we have $\sum_{u'\in R_u} z_{vu'}\leq
\alpha\beta|R_u|$.  Fix some $u \in \mathcal{N}_{\alpha}(v)$, and consider the
total cost of the $k$-tuples in $\kay_v^{(u)}$. The corresponding cluster-cost
is $\sum_{K \in \kay_v^{(u)}}1-w_{K}$ while the LP cost is $\sum_{K \in
  \kay_v^{(u)}}(1-x_{K})(1-w_{K})+x_{K}w_{K}$.

Let $\kay_v'$ be the set of $k$-tuples $K \subset \mathcal{N}'_{\alpha}(v)$
with $v,u \in K$.

\textbf{Case 2.1.1:} \emph{$z_{vu} \leq \beta\alpha$}. In this case, for each
$K \in \kay_v^{(u)}$, we have
\[ x_K \leq \sum_{u' \in K}z_{vu'} \leq kz_{vu} \leq k\beta\alpha, \] so
that charging $\frac{1}{1-k\beta\alpha}$ times the LP-cost to each
$k$-tuple in $\kay_v^{(u)}$ suffices to pay for the cluster cost of all such
tuples.

\textbf{Case 2.1.2:} \emph{$z_{vu} > \beta\alpha$}. In this case, by using $x_K \leq \sum_{u \in K}z_{vu}$, we have $1 - x_K \geq 1 - \sum_{u \in K}z_{vu}$. Furthermore,
\[ x_K \geq z_{vu} - \min_{u'\in K/\{u\}}z_{vu'} \geq z_{vu} - \frac{1}{k-1}\sum_{u' \in K/\{u\}}z_{vu'}. \]
Letting $\sigma = \sum_{u' \in K/\{u\}}z_{vu'}$ so that $1-x_K \geq 1 - z_{vu} - \sigma$, we have
the following lower bound on the LP-cost of $K$:
\begin{align*}
  (1-w_K)(1-x_K) + w_Kx_K &\geq (1-w_K)(1 - z_{uv} - \sigma) + w_K(z_{uv} - \frac{1}{k-1}\sigma) \\
  &= (1-w_K)(1 - 2z_{uv} - \frac{k-2}{k-1}\sigma) + z_{uv} - \frac{1}{k-1}\sigma. 
\end{align*}
Now, summing over all $K \in \kay_v^{(u)}$ and using the inequality $\sum_{K \in \kay_v^{(u)}}\frac{1}{k-1}\sum_{u' \in K/\{u\}}z_{vu'} \leq \sizeof{\kay_v^{(u)}}\beta\alpha$ yields the following lower bound on the total LP-cost of the $k$-tuples in $\kay_v^{(u)}$:
\begin{align*}
  &\sum_{K \in \kay_v^{(u)}}[(1-w_k)(1-x_K) + w_Kx_K] \geq \sum_{K \in \kay_v^{(u)}}[(1-w_K)(1 - 2z_{uv} - \frac{k-2}{k-1}\sigma)+ z_{uv} - \beta\alpha] \\
  &\geq \sum_{K \in \kay_v^{(u)}}[(1-w_K)(1-z_{uv}-\frac{k-2}{k-1}\sigma - \beta\alpha)]\geq \sum_{K \in \kay_v^{(u)}}\left[(1-w_K)[1-(k-1)\alpha - \beta\alpha]\right].
\end{align*}
Thus, charging each k-tuple in $\kay_v^{(u)}$ a factor of $\frac{1}{1-(k-1)\alpha-\beta\alpha}$ times its LP-cost
pays for the cluster-cost of all k-tuples in $\kay_v^{(u)}$.

\textbf{Case 2.2:} \emph{The cost of splitting $k$-tuples across clusters.}  Again, we refer to such tuples as split tuples. Each split tuple $K$ incurs a cluster-cost of $w_K$ and an
LP-cost of $x_Kw_K + (1-x_K)(1-w_K)$. First, suppose that $K$ is a split $k$-tuple with $v
\in K$. Since $K$ is split, there is $u' \in K/\mathcal{N}'_{\alpha}(v)$ and thus we have $x_K \geq z_{vu'}
> \alpha$, so charging $\frac{1}{\alpha}$ times the LP cost pays for
such $K$. We still must pay for the split tuples $K$ with $v \notin K$.

Let $S' \subset S/\mathcal{N}'_{\alpha}(v)$ be such that $|S'|\leq k-1$. Furthermore, let $\kay_{v}^{(S')}$ denote the set of split tuples $K$ such that $v\notin K$ and $K/\mathcal{N}'_{\alpha}(v)=S'$. According to the definition of $S'$, for any split tuple $K$, there is a corresponding $S'$.  We show that the total
cluster-cost of the tuples in $\kay_{v}^{(S')}$ is at most a constant time their
total LP-cost.

\textbf{Case 2.2.1:} \emph{There exists a vertex $u \in S'$ such that $z_{vu} \geq (1+\beta)\alpha$}.
In this case, for every $K \in \kay_v^{(S')}$, we can take some arbitrary $u' \in K \cap \mathcal{N}_{\alpha}(v)$ and obtain
\[ x_K \geq z_{vu} - z_{vu'} \geq \beta\alpha, \]
since $u' \in \mathcal{N}_{\alpha}(v)$ implies $z_{vu} \leq \alpha$. Thus, in this case,
charging $\frac{1}{\alpha\beta}$ times the LP-cost of each tuple in $\kay_v^{(S')}$ pays for the
cluster-cost of all tuples in $\kay_v^{(S')}$.

\textbf{Case 2.2.2:} \emph{For all $u\in S'$, $z_{vu} \leq (1+\beta)\alpha$}. Consider any $K \in \kay_v^{(S')}$. Let $\tilde{S} = K \cap \mathcal{N}_{\alpha}(v)$, and $\sigma^{'}_{S'} =
\sum_{u \in S'}z_{vu},$ $\sigma^{''}_{\tilde{S}} = \sum_{u \in
  \tilde{S}}z_{vu}$. We have the following bounds:
\begin{gather*}
 1 - x_{K} \geq 1 - \sum_{u \in K}z_{vu} = 1 - (\sum_{u \in S'}z_{vu} + \sum_{u \in \tilde{S}}z_{vu}) = 1-(\sigma^{'}_{S'}+\sigma^{''}_{\tilde{S}}), \\
 x_K \geq \max_{u,u' \in K}[z_{vu} - z_{vu'}] \geq \max_{u \in S',\ u' \in \tilde{S}}[z_{vu} - z_{vu'}] \geq \frac{1}{\sizeof{S'}}\sigma^{'}_{S'} - \frac{1}{\sizeof{\tilde{S}}}\sigma^{''}_{\tilde{S}}.
\end{gather*}
Combining these bounds yields the following lower bound on the LP-cost of $K$.
\begin{align}
  (1-w_K)(1-x_K) + w_Kx_K &\geq (1-w_K)(1 - \sigma^{'}_{S'}- \sigma^{''}_{\tilde{S}}) + w_K\left(\frac{\sigma^{'}_{S'}}{\sizeof{S'}} - \frac{\sigma^{''}_{\tilde{S}}}{\sizeof{\tilde{S}}}\right) \label{eq:ksquare-bound}\\
  &= w_K\left[\frac{\sizeof{S'}+1}{\sizeof{S'}}\sigma^{'}_{S'} + \frac{\sizeof{\tilde{S}}-1}{\sizeof{\tilde{S}}}\sigma^{''}_{\tilde{S}} - 1\right] + 1 - \sigma^{'}_{S'} - \sigma^{''}_{\tilde{S}}\nonumber.
\end{align}

The map $K \mapsto (K \cap \mathcal{N}_{\alpha}(v))$ induces a bijection between $\kay_v^{(S')}$ and ${\mathcal{N}_{\alpha}(i) \choose k - |S'|}$. By using $z_{vu'}\leq \alpha\beta$ for $u' \in  K\cap\mathcal{N}_{\alpha}(v)$, we have 
\begin{align}\label{eq:sigmaIIK}
& \sum_{K \in \kay_v^{(S')}}\sigma^{''}_{\tilde{S}} = \sum_{K \in \kay_v^{(S')}}\sum_{u'\in K\cap\mathcal{N}_{\alpha}(v)} z_{vu'}  \\
&= \frac{k-|S'|}{\sizeof{\mathcal{N}_{\alpha}(v)}}{\sizeof{\mathcal{N}_{\alpha}(v)} \choose (k-|S'|)} \sum_{u'\in \mathcal{N}_{\alpha}(v)} z_{vu'}  \nonumber
\leq  \sum_{K \in \kay_v^{(S')}} \alpha\beta\sizeof{K\cap\mathcal{N}_{\alpha}(v)}. \nonumber
\end{align}
Since $\alpha, \beta \leq 1/k$, we also have
\begin{align}\label{eq:1minussigmaI}
1 - \sigma^{'}_{K} - \alpha\beta \sizeof{\tilde{S}} \geq 1 - \sizeof{S'}(1+\beta)\alpha - \sizeof{\tilde{S}}\beta\alpha \geq 1 - (k-1)(1+\beta)\alpha - \beta\alpha \geq 0.
\end{align}
Therefore, summing inequality~\eqref{eq:ksquare-bound} over all $K \in \kay_v^{(S')}$ gives the following
lower bound on the total LP-cost of all tuples in $\kay_v^{(S')}$:
\begin{align*}
  &\sum_{K \in \kay_v^{(S')}}[(1-w_K)(1-x_K) + w_Kx_K]  \\
   &\geq \sum_{K \in \kay_v^{(S')}}\left( w_K\left[\frac{\sizeof{S'}+1}{\sizeof{S'}}\sigma^{'}_{S'} + \frac{\sizeof{\tilde{S}}-1}{\sizeof{\tilde{S}}}\sigma^{''}_{\tilde{S}} - 1\right] + 1 - \sigma^{'}_{S'} - \sigma^{''}_{\tilde{S}}\right)\\
   & \geq \sum_{K \in \kay_v^{(S')}}\left( w_K\left[\frac{\sizeof{S'}+1}{\sizeof{S'}}\sigma^{'}_{S'} + \frac{\sizeof{\tilde{S}}-1}{\sizeof{\tilde{S}}}\sigma^{''}_{\tilde{S}} - 1\right] + 1 - \sigma^{'}_{S'} - \alpha\beta|\tilde{S}|\right) \\
& \geq \sum_{K \in \kay_v^{(S')}}w_K\left[\frac{\sizeof{S'}+1}{\sizeof{S'}}\sigma^{'}_{S'} + \frac{\sizeof{\tilde{S}}-1}{\sizeof{\tilde{S}}}\sigma^{''}_{\tilde{S}} - 1 + 1 - \sigma^{'}_{S'} - \alpha\beta|\tilde{S}|\right]\\
   &\geq \sum_{K \in \kay_v^{(S')}}w_K\left[\frac{1}{\sizeof{S'}}\sigma^{'}_{S'} + \frac{\sizeof{\tilde{S}} - 1}{\sizeof{\tilde{S}}}\sigma^{''}_{\tilde{S}}- \alpha\beta|\tilde{S}|  \right] \geq \sum_{K \in \kay_v^{(S')}}w_K[\alpha + 0 - (k-1)\alpha\beta],
\end{align*}
where the second inequality is due to~\eqref{eq:sigmaIIK} and the third inequality follows from~\eqref{eq:1minussigmaI} and $w_K\leq 1$.

Therefore, charging a factor of $\frac{1}{\alpha(1-(k-1)\beta)}$ times the LP-cost for each tuple in $\kay_v^{(S')}$ pays for the cluster-cost of all tuples in $\kay_v^{(S')}$.
 
In summary, if $\alpha, \beta \leq 1/k$, then charging each tuple a factor of $c$ times its LP cost,
where
\begin{align*}
c=\max\{\frac{1}{\beta\alpha}, \frac{1}{1-(k-1)\alpha}, \frac{1}{1-(k-1)\alpha-\beta\alpha},\frac{1}{\alpha[1-(k-1)\beta]}\} = \frac{1}{\alpha\beta},
\end{align*}
suffices to compensate the cluster-cost of all tuples.

\section{Proof of Theorem~\ref{twomotifs}} \label{app4}

For the MMCC problem, the proof of Theorem~\ref{thm:2} (Appendix~\ref{app3}) may be generalized by independently handling tuples of fixed sizes. However, to obtain a tighter approximate constant then the one presented in Theorem~\ref{twomotifs}, we next show how to modify the corresponding analysis for Case 2.2.2. 

The analysis of Case 2.2.2 for mixed motifs proceeds as follows. Define $S^* = \{u \in S/\mathcal{N}'_{\alpha}(v), z_{vu} \leq (1+\beta)\alpha\}$ and $\bar{\sigma} = \frac{1}{|S^*|}\sum_{u\in S^*} z_{vu}\leq (1+\beta)\alpha$. For $S' \subseteq S^*$ of size $|S'|\leq k-1$, and for all $u\in S'$, it holds that $z_{vu}< (1+\beta)\alpha$. 

Let $\kay_v^{(S')}$ be the set of all $k$-tuples $K$ such that $K/\mathcal{N}'_{\alpha}(v) = S'$ and $v\notin K$. We need to find a constant $c$ such that 
\begin{align*}
&\sum_{u' \in \mathcal{N}_{\alpha}(v)}\sum_{u\in S^*}w_{u'u}+ \lambda\sum_{S' \subseteq S^*}\sum_{K\in \kay_v^{(S')} } w_{K} \\
&\leq c \left\{\sum_{u' \in \mathcal{N}_{\alpha}(v)}\sum_{u\in S^*} [w_{u'u}z_{u'u}+  (1-w_{u'u})(1-z_{u'u})] \right.\\
&\left. + \lambda\sum_{S' \subseteq S^*}\sum_{K\in \kay_v^{(S')} } [w_{K}x_{K}+  (1-w_{K})(1-x_{K})]\right\}.
\end{align*}

Recall that $\tilde{S} = K/S$, and that $\sigma^{'}_{S'} = \sum_{u \in S'}z_{vu}$ and $\sigma^{''}_{\tilde{S}} = \sum_{u' \in \tilde{S}}z_{vu'}$. Using the same method as the one outlined in the derivations of~\eqref{eq:ksquare-bound} and~\eqref{eq:sigmaIIK}, and observing that $\sigma^{''}_{\tilde{S}} \geq 0$, we obtain
\begin{align} \label{ineq:kmotifpart}
&\sum_{S' \subseteq S^*}\sum_{K\in \kay_v^{(S')} } [w_{K}x_{K}+  (1-w_{K})(1-x_{K})]  \\ \nonumber
&\geq  \sum_{S' \subseteq S^*}\sum_{K\in \kay_v^{(S')} } \left[w_K\left(\frac{\sizeof{S'}+1}{\sizeof{S'}}\sigma^{'}_{S'} - 1\right) + 1 - \sigma^{'}_{S'} - \alpha\beta|\tilde{S}|\right]\\ \nonumber
&\geq \sum_{t = 1}^{k-1}  \sum_{S' \subseteq S^*, |S'| = t}\left[(1 - \sigma^{'}_{S'} - \alpha\beta|\tilde{S}|)|\kay_v^{(S')}| +\left(\frac{t+1}{t}\sigma^{'}_{S'} - 1\right) \sum_{K\in \kay_v^{(S')} } w_K \right], 
\end{align}
where the sum of the coefficients in front of the term $\alpha\beta$ equals $-\sum_{t = 1}^{k-1}(k-t){|\mathcal{N}_{\alpha}(v)| \choose k-t}{|S^*| \choose t}$.

Using the same approach as for the derivations when $k=2$, we have 
\begin{align} \label{ineq:2motifpart}
&\sum_{u\in S^*} \sum_{u' \in \mathcal{N}_{\alpha}(v)} [w_{u'u}z_{u'u}+  (1-w_{u'u})(1-z_{u'u})]  \\
&\geq   \sum_{u\in S^*} \left[(1 - z_{vu} - \alpha\beta)| \mathcal{N}_{\alpha}(v)| + \left(2z_{vu} - 1\right) \sum_{u' \in \mathcal{N}_{\alpha}(v)} w_{uu'}\right], 
\end{align}
where the sum of the coefficients in front of the term $\alpha\beta$ equals $-\sizeof{\mathcal{N}_{\alpha}(v)}\sizeof{S^*}$.

Next, define two constants $r$ and $r'$ based on
\begin{align*}
r= \frac{(k-2)\sizeof{\mathcal{N}_{\alpha}(v)}\sizeof{S^*}}{\sizeof{\mathcal{N}_{\alpha}(v)}\sizeof{S^*}+\lambda\sum_{t=1}^{k-1}(k-t){|\mathcal{N}_{\alpha}(v)| \choose k-t}{|S^*| \choose t}},\, r' = (k-2) - r,
\end{align*}
so that they satisfy
\begin{align*}
|\mathcal{N}_{\alpha}(v)||S^*| r' = \lambda\sum_{t=1}^{k-1}(k-t){|\mathcal{N}_{\alpha}(v)| \choose k-t}{|S^*| \choose t} r.
\end{align*}
By choosing $\alpha \leq \frac{1}{k}$ and $\beta \leq \frac{1}{k-r}$, we can verify that, for $1\leq t \leq k-1$, any $S'\subseteq S^*$, $|S'| = t$, and $u\in S^*$,
\begin{align} \nonumber
1-\sigma^{'}_{S'} - (k-t- r)\alpha\beta&\geq 1-t\alpha(1+\beta)-(k-t- r)\alpha\beta \\
&\geq 1-(k-1)\alpha(1+\beta)-(1-r)\alpha\beta \geq 0,  \label{ineq:nonnega1}\\
1- z_{vu}- (1+ r')\alpha\beta&\geq 1-\alpha(1+\beta)-(k-1- r)\alpha\beta \geq 0.  \label{ineq:nonnega2}
\end{align}

Combining inequalities~\eqref{ineq:kmotifpart} and~\eqref{ineq:2motifpart} and inserting $r$ and $r'$ into the expressions, we obtain 
\begin{align*}
&\sum_{u\in S^*} \sum_{u' \in \mathcal{N}_{\alpha}(v)} [w_{u'u}z_{u'u}+  (1-w_{u'u})(1-z_{u'u})] \\
&+ \lambda\sum_{S' \subseteq S^*}\sum_{K\in \kay_v^{(S')} } [w_{K}x_{K}+  (1-w_{K})(1-x_{K})]  \\
&\geq  \sum_{u\in S^*} \left[(1 - z_{vu} - (1+r')\alpha\beta)| \mathcal{N}_{\alpha}(v)| + \left(2z_{vu} - 1\right) \sum_{u' \in \mathcal{N}_{\alpha}(v)} w_{uu'}\right] \\
& +  \lambda\sum_{t = 1}^{k-1}  \sum_{S' \subseteq S^*, |S'| = t}\left[(1 - \sigma^{'}_{S'} - (k-t-r)\alpha\beta)|\kay_v^{(S')}| +\left(\frac{t+1}{t}\sigma^{'}_{S'} - 1\right) \sum_{K\in \kay_v^{(S')} } w_K \right]  \\
& \geq  \sum_{u\in S^*}\left\{ \left[z_{vu} - (1+r')\alpha\beta\right] \sum_{u' \in \mathcal{N}_{\alpha}(v)} w_{uu'}\right\} \\
& + \lambda\sum_{t = 1}^{k-1}  \sum_{S' \subseteq S^*, |S'| = t}\left\{\left[\frac{1}{t}\sigma^{'}_{S'} - (k-t-r) \alpha\beta\right] \sum_{K\in \kay_v^{(S')} } w_K \right\} \\
& \geq \min\{\alpha - (1+r')\alpha\beta, \alpha -  (k-1-r) \alpha\beta\}\sum_{u' \in \mathcal{N}_{\alpha}(v)}\sum_{u\in S^*}w_{u'u}+ \lambda\sum_{S' \subseteq S^*}\sum_{K\in \kay_v^{(S')} } w_{K},
\end{align*} 
where the second inequality is due to inequalities \eqref{ineq:nonnega1} and \eqref{ineq:nonnega2}, and $w_{uu'}, w_{K}\leq 1$.

Therefore, charging a factor of $\min\{\alpha - (1+r')\alpha\beta, \alpha -  (k-1-r) \alpha\beta\} = \alpha - (k-1-r)\alpha\beta$ times the LP-cost for all pairs $(u,u')$ such that $u'\in \mathcal{N}_{\alpha}(v)$ and $u\in S^*$, and for all $k$-tuples $K$ such that $K/\mathcal{N}_{\alpha}(v) \subseteq S^*$ compensates for splitting all such pairs and $k$-tuples during clustering.
 
Combining all cases described in Table~\eqref{tab:proofroadmap2} shows that if $\alpha\leq 1/k, \beta \leq 1/(k - r)$, then charging each pair and $k$-tuple a factor of $c$ times its LP cost,
where
\begin{align*}
c=\max\{\frac{1}{\beta\alpha}, \frac{1}{1-(k-1)\alpha}, \frac{1}{1-(k-1)\alpha-\beta\alpha},\frac{1}{\alpha[1-(k-1-r)\beta]}\} = \frac{1}{\alpha\beta} \geq k(k - r),
\end{align*}
suffices to compensate the cluster-cost of all pairs and tuples.

Note that, however, $r$ depends on $|S^*|$ and $\mathcal{N}_{\alpha}(v)$ and these values are not known a priori and they may change over different iterations. Hence, we need to find a universal lower bound for $r$. Since $|S^*|+|\mathcal{N}_{\alpha}(v)|\leq n$, a simple bound of the form may be obtained according to
\begin{align*}
r\geq \frac{(k-2)\sizeof{\mathcal{N}_{\alpha}(v)}}{\sizeof{\mathcal{N}_{\alpha}(v)}+ \lambda \sum_{t=1}^{k-1}\sizeof{\mathcal{N}_{\alpha}(v)} {\sizeof{\mathcal{N}_{\alpha}(v)}-1 \choose k-t-1}{\sizeof{S^*} \choose t}} \geq \frac{k-2}{1+ \lambda n^{k-1}} =r_0.
\end{align*}
Therefore, if $\alpha\leq 1/k, \beta \leq 1/(k-r_0)$, one can achieve the constant approximation factor $c=1/\alpha\beta$. 

%
%
%

\bibliographystyle{plain}
\bibliography{motif-siopt.bib}

\end{document}